\newcommand{\fref}[1]{Fig.~\ref{#1}}
\newcommand{\sref}[1]{Section~\ref{#1}}
\newcommand{\aref}[1]{Assump.~\ref{#1}}
\newcommand{\lref}[1]{Lemma~\ref{#1}}
\newcommand{\tref}[1]{Theorem~\ref{#1}}
\newcommand{\cref}[1]{Corollary~\ref{#1}}
\newcommand{\dref}[1]{Definition~\ref{#1}}
\newcommand{\pref}[1]{Proposition~\ref{#1}}
\newcommand{\projection}[1]{\mathsf{P}_{#1}}
\newcommand{\norm}[2]{\lVert#1\rVert_{#2}}
\newcommand{\inner}[3]{\langle#1, #2\rangle_{#3}}
\newcommand*{\myqed}{\null\nobreak\hfill\ensuremath{\square}}%
\DeclarePairedDelimiter{\Span}{\text{span}\{}{ \} }
\DeclarePairedDelimiter{\Diag}{ \text{diag}\{ }{ \} }
\DeclareMathOperator{\Adj}{adj}
\DeclareMathOperator*{\argmin}{argmin}
\begin{document}

\begin{frontmatter}


\thanks{The authors are with the Research Center for Systems and Technologies (SYSTEC-ARISE), Faculdade de Engenharia, Universidade do Porto, Portugal
{\tt\small \{matheus.reis, pedro.aguiar\}@fe.up.pt}. This work was funded by the Portuguese Foundation for Science and Technology (FCT) through grant 2020.06795.BD, DOI: 10.54499/2020.06795.BD 
and the Associate Laboratory ARISE – Advanced Production and Intelligent Systems (LA/P/0112/2020, DOI: 10.54499/LA/P/0112/2020), and the SYSTEC - Research Center for Systems and Technologies, funded by FCT/MECI through national funds.}

\title{On the Stability of Undesirable Equilibria in the Quadratic Program Framework for Safety-Critical Control}

\author{Matheus F. Reis},    
\author{A. Pedro Aguiar}    

\address{Research Center for Systems and Technologies (SYSTEC), ARISE, \\
Faculty of Engineering, University of Porto, Portugal}  

\begin{keyword}                                 
Lyapunov methods, Control barrier functions     
\end{keyword}                             		

\begin{abstract}                          
Control Lyapunov functions (CLFs) and Control Barrier Functions (CBFs) have been used to develop provably safe controllers by means of quadratic programs (QPs). This framework guarantees safety in the form of trajectory invariance with respect to a given set, but it can introduce undesirable equilibrium points to the closed-loop system, which can be asymptotically stable.
In this work, we present a detailed study of the formation and stability of equilibrium points for { nonlinear, control-affine systems} with the CLF-CBF-QP framework with multiple CBFs.
In particular, we show that the stability of undesirable equilibrium points is dependent on the CLF and CBF geometrical properties.
We introduce the concept of CLF-CBF compatibility, regarding a CLF-CBF pair inducing no stable equilibrium points other than the CLF global minimum on the corresponding closed-loop dynamics.
{ Considering LTI and drift-less full-rank systems}, sufficient conditions for CLF-CBF compatibility with quadratic CLF and CBFs are derived, and we propose a novel control strategy to induce smooth changes in the CLF geometry at certain regions of the state space, aiming to satisfy the CLF-CBF compatibility conditions. { The strategy can be used with multiple safety objectives while avoiding the convergence of trajectories towards undesired equilibrium points.}
Numerical examples and simulations illustrate the proposed method and its applicability.
\vspace{-5mm}
\end{abstract}

\end{frontmatter}

\section{Introduction}

The engineering of {\it safety-critical systems} is a fruitful and rich topic receiving a growing amount of attention nowadays.
Safety-critical systems are of crucial importance for many industrial sectors and production lines, where the stability of feedback-controlled systems is just so important as their capacity to provide safe behaviour under a wide variety of operational circumstances.
Furthermore, safety is also a mandatory property for systems with high levels of interoperability, cooperation, or coordination with humans.

The notion of {\it safety} was first introduced in 1977 in the context of program correctness by \cite{Lamport1977} and later formalized in \cite{Alpern1985}, which also introduced the concept of {\it liveness}. Intuitively, one can describe these two contrasting system properties as: (i) the requirement of avoiding undesired situations while (ii) guaranteeing the eventual achievement of a desired configuration, respectively.
As pointed out by \cite{Ames2019}, in the context of control systems, liveness can be identified as an {\it asymptotic stability} requirement with respect to a certain set of desired or objective states, while safety can be defined as the {\it invariance} of the system trajectories to some set, defined as the set of {\it safe} states.
While the design of asymptotically stabilizing controllers has been extensively studied in control Lyapunov theory \cite{Khalil2002}, the design of controllers capable of guaranteeing 
safety has been the subject of study in the topic of Control Barrier Functions (CBFs) \cite{Wieland2007}.
%
%
%
%
More recently, \cite{Ames2014} introduced the idea of unifying CBFs with Control Lyapunov Functions (CLFs) through the use of quadratic programs (QPs), combining safety and stabilization requirements in a single control framework.
%

However, the study of controllers combining the two desirable properties of {\it stability} and {\it safety} is still in early stages. 
In \cite{Reis_LCSS}, it is shown that the QP-based framework proposed by \cite{Ames2014} can introduce undesirable equilibrium points other than the CLF minimum into the closed-loop system. The fact that some of these undesirable equilibrium points can be asymptotically stable and can be arbitrarily close to the set of unsafe states is an important practical limitation of the framework, since it could result in system deadlocks and expose the system to close-to-failure situations, forcing the designer to opt for highly conservative safety margins when designing the system safety specifications.
%
%
In \cite{Notomista2022}, a CBF-based controller was proposed in which safety is ensured with respect to multiple non-convex unsafe regions, and undesirable stable equilibria are practically avoided. Still, the method is dependent on the computation of a nonlinear ``convexification'' function for the unsafe sets, which is dependent on the barrier geometry and could be computationally difficult to solve. Furthermore, it is unclear how these results can be generalized to the CLF-CBF framework.
In \cite{Wang2018_multirobot}, the problem of deadlocks in the QP-based formulation for safety-critical systems was addressed for the safety-filter CBF-QP-based controller 
as proposed in \cite{Ames2019}. In this context, deadlocks are caused by a conflict between the stabilization objectives of the nominal controller and the safety barriers, and were 
managed by introducing a consistent perturbation into the QP constraints. Although efficient for solving some types of deadlocks, this proposed method modifies the safety constraints and
allows for the possibility of leaving the safe set if the deadlock situation happens to occur on the boundary, which can lead to unsafe behavior.
Considering the CLF-CBF framework, \cite{tan2024undesired} has proposed a modified CLF-CBF-based QP controller in which {\it interior} equilibrium points and certain {\it boundary equilibrium points} 
satisfying a certain condition do not exist for the resulting closed-loop system. However, boundary equilibrium points could still occur in general. 
%

The contributions of the present work are:
\begin{enumerate}[label=\roman*)]
\item We present an analysis of the existence and stability conditions for equilibrium points occurring in the CLF-CBF-QP-based framework \cite{Ames2014} for nonlinear, control affine systems (Sections \ref{sec:existence}-\ref{sec:stability}).
\item We introduce the concept of {\it CLF-compatibility}, denoting the property of a CLF that, for a given system dynamics and set of CBFs modeling the safety requirements for the task, guarantees that the CLF-CBF-QP controller does not introduce any stable equilibria other than the CLF global minimum  (\sref{sec:clf_compatibility}).
\item We derive necessary and sufficient conditions for CLF compatibility for quadratic CLF/CBFs and for the following classes of systems: (i) control linear systems with full-rank input matrix and (ii) linear time-invariant (LTI) systems (\sref{sec:prob_formulation}).
\item For quadratic CLF/CBFs and for the types of systems (i) and (ii), we propose a method for computing a corresponding compatible CLF from a non-compatible one, and a CLF-CBF-QP controller that adaptively modifies the CLF geometry to achieve CLF-compatibility using the found compatible CLF (\sref{sec:proposed_controller}).
\end{enumerate}

\section{Preliminaries}
\label{sec:qp-based-control-analysis}

\subsection{Notation}
\label{sec:notation}
The fields of real and complex numbers are $\mathbb{R}$ and $\mathbb{C}$, respectively.
Given a matrix $A \in \mathbb{R}^{n \times m}$, 
$[A]_{ij} \in \mathbb{R}$ denotes its $i$-th row, $j$-th column component and
$[A]_{k} \in \mathbb{R}^n$ denotes its $k$-th column.
The group of real symmetric matrices is $\mathcal{S}^n \subset \mathbb{R}^{n \times n}$.
The determinant of a square matrix $A \in \mathbb{R}^{n \times n}$ is $\det A$ or $|A|$, its Frobenius norm is $\norm{A}{\mathcal{F}}$, and 
its adjoint matrix is $\Adj{A} \in \mathbb{R}^{n \times n}$, where $A \, \Adj{A} = |A| I_n$, where 
$I_n \in \mathbb{R}^{n \times n}$ is the $n \times n$ identity matrix.
Given a vector $v \in \mathbb{R}^{n}$, $[v]_k \in \mathbb{R}$ is its $k$-th component.
A scalar-valued function $f: \mathbb{R}^n \rightarrow \mathbb{R}$ is said to be of (differentiability) class $C^k$ if all of its $k$-th order partial derivatives exist and are continuous.
Consider the class $C^2$ function $f: \mathbb{R}^n \rightarrow \mathbb{R}$:
(i) its {\it gradient} is defined as the vector-valued function $\nabla f: \mathbb{R}^n \rightarrow \mathbb{R}^n$ such that
$[\nabla f(x)]_k = \frac{\partial f(x)}{\partial x_k} = \partial_k f(x)$, where $\partial_k$ denotes partial differentiation with respect to the $k$-th component of the function input,
(ii) its {\it Hessian} matrix is defined as the matrix-valued function $H_f: \mathbb{R}^n \rightarrow \mathcal{S}^n$ such that $[H_f(x)]_{ij} = \frac{\partial^2 f(x)}{\partial x_i \partial x_j}$.
$L_g f$ is the Lie derivative of $f$ along $g: \mathbb{R}^{n} \rightarrow \mathbb{R}^{n \times m}$, that is, $L_g f = \nabla f^\mathsf{T} g \in \mathbb{R}^{m}$.
%
%
The inner product between two vectors $u, v \in \mathbb{R}^n$ induced by a positive semidefinite matrix $G = G^\mathsf{T} \ge 0$ is given by $\inner{u}{v}{G} = u^\mathsf{T} G \, v$.
This inner product induces a norm $\norm{v}{G}^2 = \inner{v}{v}{G} = v^\mathsf{T} G \, v$ over $\mathbb{R}^n$. The standard inner product is then $\inner{u}{v}{} = \inner{u}{v}{I_n}$, with standard Euclidean norm $\norm{v}{}^2 = \norm{v}{I_n}^2$.
The orthogonal complement of a subspace $\mathcal{W}$ is denoted by $\mathcal{W}^\perp$, with the notion of orthogonality dependent upon the considered inner product $\inner{\cdot}{\cdot}{G}$.
The set 
$\Span{v_1, \cdots, v_p}$ is the set of all linear combinations of vectors from $\{v_1, \cdots, v_p \} \subset \mathbb{R}^n$. 
%
The positive semi-definite cone of symmetric matrices is $\mathcal{S}^n_+ \subset \mathcal{S}^n$.
%
%
The null space and spectrum of $A \in \mathbb{R}^{n \times n}$ are given by $\mathcal{N}(A) \subset \mathbb{R}^n$, $\sigma(A) = \{ \lambda \in \mathbb{C} \,|\, \det(\lambda I_n - A) = 0 \} \subset \mathbb{C}$, respectively.

\subsection{Polynomial Matrices}

A real polynomial matrix of degree $d \in \mathbb{N}$ is a matrix-valued polynomial function $P: \mathbb{R} \rightarrow \mathbb{R}^{n \times n}$ defined as $P(\lambda) = \sum^d_{i=0} \lambda^i A_i$, where $A_0, A_1, \cdots, A_p \in \mathbb{R}^{n \times n}$.
The Generalized Eigenvalue Problem (GEP) for $P$ consists of finding all the pairs 
$(\lambda, q) \in \mathbb{C} \times \mathbb{R}^n, q \neq 0$ satisfying $P(\lambda) q = 0$.
The spectrum of a polynomial matrix $P$ is defined as the set of generalized eigenvalues $\lambda \in \mathbb{C}$ solving the GEP.
This set is denoted by $\sigma(P) = \{ \lambda \in \mathbb{C} : \det P(\lambda) = 0 \} \subseteq \mathbb{C}$, and
$P$ is said to be a {\it regular} pencil if $\sigma(P) \subset \mathbb{C}$, that is, if $\det P(\lambda)$ does not vanish identically.
If $P$ is regular, $\det P(\lambda)$ is a polynomial of maximum degree $n d$. Therefore, $\sigma(P)$ has a maximum of $n d$ generalized eigenvalues in $\mathbb{C}$.
A polynomial matrix $P$ may have intervals of definiteness, that is, a collection of intervals $\{ \mathcal{I}_1, \mathcal{I}_2, \cdots \}$, such that
$P(\lambda) \gtrless 0 \,\forall \lambda \in \mathcal{I}_i \subset \mathbb{R}$ \cite{Higham2009}.
A real linear matrix pencil (LMP) or a real, affine polynomial matrix is a real polynomial matrix of degree $1$, that is, a function
$P: \mathbb{R} \rightarrow \mathbb{R}^{n \times n}$ defined by a matrix pair $(A_0,A_1) \in \mathbb{R}^{n \times n} \times \mathbb{R}^{n \times n}$ as $P(\lambda) = A_0 + \lambda A_1$ \cite{Ikramov1993}.
%
%
In particular, if $P$ is regular, then it has exactly $n$ generalized eigenvalues.
Any LMP defined by the matrix pair $(A_0,A_1)$ can be decomposed by the generalized Schur decomposition: for any matrix pair $(A_0,A_1) \in \mathbb{R}^{n \times n} \times \mathbb{R}^{n \times n}$, 
there exist orthogonal matrices $Q, Z$ such that $Q (A_0,A_1) Z^\mathsf{T} = (S,T)$, where $S \in \mathbb{R}^{n \times n}$ is block upper triangular and $T \in \mathbb{R}^{n \times n}$ is upper triangular.
The GEP for an LMP can be efficiently solved by the QZ algorithm, which computes the generalized Schur decomposition in a numerically stable manner \cite{Wilkinson1979}.
The spectrum of $P$ can then be computed from the block diagonal elements of $S$ and $T$.

\subsection{CLF-CBF-Based Safety Critical Control}
Consider the nonlinear control affine system
\begin{align}
\dot{x} = f(x) + g(x) u
\label{eq:affine_nonlinear}
\end{align}
where $x \in \mathbb{R}^n$ is the system state and $u \in \mathbb{R}^m$ is the control input. Vector fields $f: \mathbb{R}^n \rightarrow \mathbb{R}^n$, $g: \mathbb{R}^n \rightarrow \mathbb{R}^{n \times m}$ are locally Lipschitz.
%
%

%

\vspace{-2mm}
\begin{defn}[CLFs]
A positive definite function $V$ is a {\it control Lyapunov function} (CLF) for system \eqref{eq:affine_nonlinear} if it satisfies:
%
\begin{align}
\inf_{u\in \mathbb{R}^m} \left[L_f V(x) + \inner{L_g V(x)}{u}{} \right] \le -\gamma(V(x)) \nonumber
\end{align}
where $\gamma: \mathbb{R}_{\ge 0} \rightarrow \mathbb{R}_{\ge 0}$ is a class $\mathcal{K}$ function \cite{Khalil2002}.
\end{defn}
\vspace{-2mm}
This definition implies that there exists a set of stabilizing controls that makes the CLF strictly decreasing everywhere outside its global minimum $x_0 \in \mathbb{R}^n$. 
%
%
%

\begin{defn}[Safety]
\label{safe}
The trajectories of a given system are safe with respect to a set $\mathcal{C}$ if $\mathcal{C}$ is forward invariant, meaning that for every $x(0) \in \mathcal{C}$, $x(t) \in \mathcal{C}$ for all $t>0$.
\end{defn} 
\vspace{-2mm}
Consider $N$ subsets $\mathcal{C}_1, \dots, \mathcal{C}_N \subset \mathbb{R}^n$  defined by the superlevel set of a continuously differentiable function $h_i : \mathbb{R}^n \rightarrow \mathbb{R}$:
\begin{align}
\mathcal{C}_i &= \{x \in \mathbb{R}^n: h_i(x) \ge 0\},\quad i =1,2,\ldots, N \label{eq:boundary}
\end{align}
\vspace{-8mm}
\begin{defn}[CBFs]
Let $\mathcal{C}_i \in \mathbb{R}^n$ be defined by one of the functions \eqref{eq:boundary}.
Then $h_i(x)$ is a (zeroing) {\it Control Barrier Function} (CBF) for \eqref{eq:affine_nonlinear} if there exists a locally Lipschitz extended class 
$\mathcal{K}_{\infty}$ function\footnote{{ A class $\mathcal{K}_{\infty}$ function $\alpha: \mathbb{R} \rightarrow \mathbb{R}$ is strictly increasing with 
$\alpha(0) \!=\! 0$ and $\lim_{t \rightarrow \infty} \alpha(t) = \infty$.}} $\alpha$ such that
\begin{align}
\sup_{u\in \mathbb{R}^m} \left[L_f h_i(x) + \inner{L_g h_i(x)}{u}{} \right] \ge -\alpha(h_i(x)) \quad \forall x \in \mathbb{R}^n \,. \nonumber
\end{align}
\end{defn}
\vspace{-4mm}
This definition simply means that the CBFs $h_i(x)$ are only allowed to decrease in the interior of their respective safe sets $\text{int}(\mathcal{C}_i)$, but not on their boundaries $\partial \mathcal{C}_i$.

Consider the closed-loop system for \eqref{eq:affine_nonlinear} 
\begin{align}
\dot{x} = f_{cl}(x) := f(x) + g(x) u^\star(x)
\label{eq:feedback_system}
\end{align}
with control law $u^\star(x)$ given by the minimum-norm feedback controller based on \cite{Ames2014}
\begin{align}
u^\star(x) &= \argmin_{(u,\delta)\in\mathbb{R}^{m+1}} \frac{1}{2} \norm{u}{}^2 + \frac{1}{2} p \delta^2 \qquad 
\label{eq:QP_control} \\
s.t. \,
L_f V &+ \inner{L_g V}{u}{} + \gamma(V) \le \delta \nonumber 
\\
\quad L_f h_i &+ \inner{L_g h_i}{u}{} \ge - \alpha(h_i) \,, \quad i \in \{ 1, \cdots, N \} \nonumber
\end{align}
with $p > 0$, $\gamma$ and $\alpha$ being class $\mathcal{K}$ and class $\mathcal{K}_\infty$ functions, respectively.
%
%
If feasible, the feedback controller \eqref{eq:QP_control} guarantees local stability of $x_0$ and safety of the closed-loop system trajectories with respect to the safe set 
\begin{align}
\mathcal{C} = \cap^N_{i=1} \mathcal{C}_i
\label{eq:joint_safe_set}
\end{align}
However, \eqref{eq:QP_control} does not guarantee global stabilization, meaning that the trajectories could converge towards equilibrium points other than the CLF minimum \cite{Reis_LCSS}.
%
%
\begin{assum}
\label{assumption:initial_state}
The initial state $x(0) \in \mathbb{R}^n$ is contained in the safe set \eqref{eq:joint_safe_set} and
the CLF minimum $x_0$ is contained in $\mathcal{C}$, that is, $h_i(x_0) \ge 0$ for all $i \in \{ 1, \cdots, N \}$.
\end{assum}
Assumption \ref{assumption:initial_state} comes from the fact that it is natural to assume that a system starts in a safe configuration; as an example, it is only natural to assume that a vehicle starts its navigation task in a safe state of non-collision against obstacles. 
%
%
Furthermore, the CLF minimum $x_0$ must be reachable by the controller \eqref{eq:QP_control}.
\begin{thm}
\label{theorem0}
Under \aref{assumption:initial_state}, the QP \eqref{eq:QP_control} is feasible for all $x \in \mathcal{C}$, if at least one the two conditions are met:
\begin{itemize}
    \item[(i)] There is only one CBF ($N=1$).
    \item[(ii)] The affine non-linear system \eqref{eq:affine_nonlinear} is driftless, that is, $f(x) = 0 \,\,\, \forall x \in \mathbb{R}^n$.
\end{itemize}
\end{thm}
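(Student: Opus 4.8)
The plan is to decouple the relaxation variable $\delta$ from the control $u$ and thereby reduce feasibility of \eqref{eq:QP_control} to feasibility of the CBF constraints alone. For any fixed $u \in \mathbb{R}^m$, the CLF constraint $L_f V + \inner{L_g V}{u}{} + \gamma(V) \le \delta$ is satisfied as soon as $\delta$ is chosen large enough, e.g.\ $\delta = L_f V(x) + \inner{L_g V(x)}{u}{} + \gamma(V(x))$. Hence the QP \eqref{eq:QP_control} is feasible at a given $x$ if and only if there exists some $u \in \mathbb{R}^m$ with $L_f h_i(x) + \inner{L_g h_i(x)}{u}{} \ge -\alpha(h_i(x))$ for all $i \in \{1,\dots,N\}$.

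For case (i), with $N = 1$, I would argue directly from the definition of a CBF. If $L_g h_1(x) \ne 0$, then $u \mapsto \inner{L_g h_1(x)}{u}{}$ is unbounded above, so the single constraint admits a solution. If $L_g h_1(x) = 0$, the defining inequality of a CBF collapses to $L_f h_1(x) \ge -\alpha(h_1(x))$, which holds by hypothesis for every $x \in \mathbb{R}^n$; the QP constraint is then met for any $u$, in particular $u = 0$. In both cases a feasible pair $(u,\delta)$ exists at every $x \in \mathbb{R}^n$, hence in particular at every $x \in \mathcal{C}$, so \aref{assumption:initial_state} plays no role in this case.

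For case (ii), driftless dynamics, I would exhibit an explicit feasible point. Since $f \equiv 0$, one has $L_f V \equiv 0$ and $L_f h_i \equiv 0$, so the choice $u = 0$, $\delta = \gamma(V(x))$ makes the CLF constraint tight and reduces each CBF constraint to $0 \ge -\alpha(h_i(x))$. Because $x \in \mathcal{C} = \cap_{i=1}^{N} \mathcal{C}_i$ implies $h_i(x) \ge 0$, and $\alpha$ is an extended class $\mathcal{K}_\infty$ function (strictly increasing with $\alpha(0) = 0$), we get $\alpha(h_i(x)) \ge 0$, hence $-\alpha(h_i(x)) \le 0$ and all $N$ constraints hold. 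Thus $(0,\gamma(V(x)))$ is feasible for every $x \in \mathcal{C}$; it is precisely the membership $x \in \mathcal{C}$, which \aref{assumption:initial_state} enforces at the initial state, that makes $u = 0$ admissible.

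I do not anticipate a genuine obstacle. The only points deserving care are the degenerate subcase $L_g h_1(x) = 0$ in part (i), which must be handled via the CBF definition rather than an unbounded search direction, and the remark in part (ii) that $u = 0$ is admissible exactly on $\mathcal{C}$ (and thus, by forward invariance of $\mathcal{C}$ once feasibility has been established, along the entire closed-loop trajectory). One may additionally observe that, feasibility being secured, strong convexity of the cost $\tfrac{1}{2}\norm{u}{}^2 + \tfrac{1}{2} p\delta^2$ over the closed polyhedral feasible set renders the minimizer $u^\star(x)$ in \eqref{eq:QP_control} unique and well defined.
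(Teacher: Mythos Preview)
Your proposal is correct and takes essentially the same approach as the paper: both arguments hinge on the observation that, for a driftless system, the choice $u=0$ satisfies every CBF constraint whenever $x \in \mathcal{C}$, after which the slack variable $\delta$ absorbs the CLF constraint. Your presentation is in fact slightly more direct---you exhibit the explicit feasible pair $(0,\gamma(V(x)))$ pointwise on $\mathcal{C}$, whereas the paper phrases the same fact geometrically (the $\delta$-axis lies in the CBF polytope) and then appeals to forward invariance along the trajectory; for case~(i) the paper simply cites \cite{Ames2014}, while you supply the standard CBF-definition argument.
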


\vspace{-4mm}
\begin{pf}
The proof for {\bf(i)} was first introduced in \cite{Ames2014}.
The proof of {\bf(ii)} is as follows: under Assumption \ref{assumption:initial_state}, the initial state $x(0) \in \mathcal{C}$, that is, $h_i(x(0)) \ge 0$ for all $i=1,\dots, N$.
Then, for driftless affine nonlinear systems, the decision space associated to the $i$-th CBF constraint is given by the half-plane $\mathbb{K}_{cbf_i}(x) = \{ (u,\delta) \in \mathbb{R}^{m+1} : 
\inner{L_g h_i(x)}{u}{} + \alpha h_i(x) \ge 0 \}$.
The intersection of these half-planes configures a convex polytope $\mathbb{K}_{cbf}(x) = \cap^N_{i=1} \mathbb{K}_{cbf_i}(x)$, which is the decision space associated to the CBF constraints. Due to the independence of the CBF constraints on the slack variable $\delta$
and due to the fact that $h_i(x(0)) \ge 0$, $\mathbb{K}_{cbf}(x(0))$ contains the the entire $\delta$-axis, that is, the line $(u, \delta) = (0, \delta) \,\, \forall \delta \in \mathbb{R}$. Therefore, $\mathbb{K}_{cbf}(x(0))$ is unbounded and non-empty.
Defining the decision space associated to the CLF constraint as the half-plane $\mathbb{K}_{clf}(x) = \{ (u,\delta) \in \mathbb{R}^{m+1}: \inner{L_g V(x)}{u}{} + \gamma(V(x)) \le \delta \}$, the feasible set associated to the QP \eqref{eq:QP_control} is the intersection 
$\mathbb{K}_{clf}(x) \cap \mathbb{K}_{cbf}(x) \subset \mathbb{R}^{m+1}$.
Notice that $\mathbb{K}_{clf}(x(0)) \cap \mathbb{K}_{cbf}(x(0)) \neq \emptyset$, meaning that the QP is initially feasible under Assumption \ref{assumption:initial_state}.
Then, since the CBF constraints guarantee the invariance of the trajectories $x(t)$ with respect to the safe set $\mathcal{C}$, $h_i(x(t)) \ge 0 \forall t \ge 0$, $i=1,\cdots,N$. Therefore, the convex polytope $\mathbb{K}_{cbf}(x(t))$ remains unbounded and non-empty $\forall t \ge 0$ (it must always contain the $\delta$-axis), and therefore the feasible set for the QP \eqref{eq:QP_control} is 
$\mathbb{K}_{clf}(x(t)) \cap \mathbb{K}_{cbf}(x(t)) \neq \emptyset$ for all $t \ge 0$. \myqed
\end{pf}
%
%
%
%
\vspace{-4mm}
\begin{assum}[Disjoint Unsafe Sets]
\label{assumption:disjoint_barriers}
The unsafe sets of the $N$ barriers are disjoint, that is:
\begin{align}
    \overline{\mathcal{C}}_i \cap \overline{\mathcal{C}}_j = \emptyset \qquad \forall i \ne j 
    \label{eq:disjoint_barriers}
\end{align}
\end{assum}
\vspace{-5mm}
\begin{rem}
Assumption \ref{assumption:disjoint_barriers} is not restrictive for the following reason: assume there exist barriers $h_i$, $h_j$ with non-empty unsafe 
set intersections, that is, $\overline{\mathcal{C}}_i \cap \overline{\mathcal{C}}_j \neq \emptyset$. Then, it is possible to construct a new composite barrier 
$h_{k}$ with unsafe set $\overline{\mathcal{C}}_k \supset \overline{\mathcal{C}}_i \cup \overline{\mathcal{C}}_j$ (under mild assumptions on the regularities 
of $h_{k}$), thus representing (almost) the same safe region as $\mathcal{C}_i \cap \mathcal{C}_j$. \cite{Molnar2023} proposes a composition method for 
combining multiple CBFs into a single one.
\end{rem}
%
%
\begin{defn}
\label{eq:transCLF}
Given a CLF $V: \mathbb{R} \rightarrow \mathbb{R}_{\ge 0}$, define the transformed CLF $\overline{V}: \mathbb{R} \rightarrow \mathbb{R}_{\ge 0}$ as
\begin{align}
\overline{V}(x) &= \int^{V(x)}_0 \gamma( \tau ) d \tau \label{eq:barV} \\
\nabla \overline{V} &= \gamma(V) \nabla V \label{eq:nabla_barV} \\
H_{\overline{V}} &= \gamma(V) H_V + \gamma^{\prime}(V) \nabla V \nabla V^\mathsf{T} \label{eq:H_barV}
\end{align}
\end{defn}
\begin{prop}
\label{prop:properties_transformed_CLF}
The transformed CLF \eqref{eq:barV} has the following properties:
\begin{itemize}
\item[(i)] $\overline{V}(x) > 0$ $\forall x \ne x_0$. Additionally, $\overline{V}(x_0) = 0$.
\item[(ii)] The integral transformation of \eqref{eq:barV} is invertible.
\item[(iii)] $V(x)$ and $\overline{V}(x)$ have the same level sets.
\end{itemize}
\end{prop}
\vspace{-5mm}
\begin{pf}
Property (i) can be seen from the fact that $\gamma: \mathbb{R}_{\ge 0} \rightarrow \mathbb{R}_{\ge 0}$ is a class 
$\mathcal{K}$ function, and therefore its integral is positive and strictly increasing. 
Furthermore, at $x = x_0$, $V(x_0) = 0$ and the limits of integration on \eqref{eq:barV} are both zero, showing that $\overline{V}(x_0) = 0$.
Property (ii) can also be inferred from the fact that $\gamma$ is of class $\mathcal{K}$: since its integral is a positive and strictly increasing function, its inverse always exists.
That means that the original $V$ can always be computed from the transformed CLF $\overline{V}$ by inverting the integral transformation \eqref{eq:barV}.
Property (iii) holds because by \eqref{eq:nabla_barV}, the gradients of $V$ and $\overline{V}$ are co-directed and $V$, $\overline{V}$ are continuous functions.
Therefore, they must share the same level sets. \myqed 
\end{pf}
\vspace{-6mm}
%
%
As will be shown in the next sections, the CLF transformation in \dref{eq:transCLF} will be useful not only for expressing the existence and stability 
conditions for equilibrium points in a simpler way, but also for developing the method for CLF-compatibility that is presented in \sref{sec:clf_compatibility}. 
\section{Existence of Equilibrium Points in the CLF-CBF Framework}
\label{sec:existence}
In this section, we extend a result from \cite{Reis_LCSS}, regarding the existence of equilibrium points when multiple CBF constraints are present.
%

\begin{defn}[Equilibrium Manifold]
\label{def:equilibrium_manifold}
Define the vector-valued transformation 
$f_i: \mathbb{R}^n \times \mathbb{R}_{\ge0} \rightarrow \mathbb{R}^{n \times n}$ associated to the $i$-th CBF as
\begin{align}
f_{i}(x, \lambda) = f + \lambda G \nabla h_{i} - p G \nabla \overline{V}
\label{eq:fi}
\end{align}
where $G(x) = g(x) g(x)^\mathsf{T} \in \mathbb{R}^{n \times n}$.
The Jacobian matrix of $f_i$ with respect to $x \in \mathbb{R}^n$ is
\begin{align}
J_{f_i}(x, \lambda) 
&= \frac{\partial f}{\partial x}
+ \lambda \frac{\partial G \nabla h_i}{\partial x}
- p \frac{\partial G \nabla \overline{V}}{\partial x}
\label{eq:Jfi}
\end{align}
\end{defn}
\vspace{-4mm}
As will be demonstrated in the next sections, \eqref{eq:fi}-\eqref{eq:Jfi} will be of central importance to characterize the existence and stability conditions for the equilibrium points of the closed-loop system.
\begin{thm}[Existence of Equilibrium Points]
\label{theorem:existence_equilibria}
Let \eqref{eq:feedback_system} be the closed-loop system formed by combining the nonlinear system \eqref{eq:affine_nonlinear}, under Assumption \ref{assumption:disjoint_barriers}, with the controller \eqref{eq:QP_control}.
{
The set of equilibrium points of \eqref{eq:feedback_system} is given by 
$\mathcal{E} = \left( \bigcup^N_{i=1} \mathcal{E}_{\partial \mathcal{C}_i} \right) \cup \mathcal{E}_{int(\mathcal{C})}$,
%
%
with
\begin{align}
    \mathcal{E}_{\partial \mathcal{C}_i} &= \{ x \in \Omega^{clf}_{i} \cap \partial \mathcal{C}_i \,|\, f_i(x, \lambda_i) = 0 \} 
    \label{eq:boundary_equilibrium} \\
    \mathcal{E}_{int(\mathcal{C})} &= \{ x \in \Omega^{clf}_{\overline{cbf}} \cap int(\mathcal{C}) \,|\, f_i(x, 0) = 0 \} 
    \label{eq:interior_equilibrium}
\end{align}
where $\Omega^{clf}_{i} \subset \mathbb{R}^n$ is the set of states where the CLF and only the $i$-th CBF constraint are active, and
$\Omega^{clf}_{\overline{cbf}} \subset \mathbb{R}^n$ is the set of states where the CLF constraint is active and all CBF constraints are inactive. Moreover, $\lambda_i \ge 0$ in \eqref{eq:boundary_equilibrium} is the KKT multiplier associated to the $i$-th active CBF constraint. The set $\mathcal{E}_{\partial \mathcal{C}_i}$ in \eqref{eq:boundary_equilibrium} is the set of {\it boundary} equilibrium points, while $\mathcal{E}_{int(\mathcal{C})}$ in \eqref{eq:interior_equilibrium} is the set of {\it interior} equilibrium points.
}
%
%
%
\end{thm}
\vspace{-4mm}
\begin{pf}
The Lagrangian associated to QP \eqref{eq:QP_control} is
\begin{align}
\mathcal{L} = \frac{1}{2} \left( \norm{u}{}^2 + p \delta^2 \right)
+ \lambda_0&( F_V + L_g V u - \delta ) \nonumber \\
- \sum^N_{i=1}&\lambda_i ( F_{h_i} + L_g h_i \, u )
\label{eq:general_lagrangian}
\end{align}
where $F_V(x) = L_f V + \gamma(V)$ and $F_{h_i}(x) = L_f h_i + \alpha(h_i)$, and 
$\lambda_i \ge 0 \in \{0, 1, \cdots, N\}$ are the KKT multipliers associated to the optimization problem.
Then, the KKT conditions are:
\begin{align}
\frac{\partial \mathcal{L}}{\partial u} 
= u + \lambda_0 g^\mathsf{T} \nabla V - \sum^{N}_{i=1} \lambda_i g^\mathsf{T} \nabla h_i &= 0 \label{eq:KKT1} \\
\frac{\partial \mathcal{L}}{\partial \delta} = p \delta - \lambda_0 &= 0 \label{eq:KKT2} \\
\lambda_0 ( F_{V} + L_g V \,u - \delta ) &= 0 \label{eq:KKT3} \\
\lambda_i ( F_{h_i} + L_g h_i \,u ) &= 0 \label{eq:KKT4}
\end{align}
with $i \in \{ 1, \cdots, N \}$. Using \eqref{eq:KKT1}-\eqref{eq:KKT2}, the QP solutions are given by:
\begin{align}
u^\star(x) &= g^\mathsf{T} \left( -\lambda_0 \nabla V + \sum^N_{i=1} \lambda_i \nabla h_i \right)
\label{eq:control_solution} \\
\delta^\star(x) &= p^{-1} \lambda_0 \,,
\label{eq:delta_solution}
\end{align}
with $\lambda_i \ge 0 \in \{ 0, 1, \cdots, N \}$. Substituting \eqref{eq:control_solution} on \eqref{eq:feedback_system} yields the following expression for the closed-loop system:
\begin{align}
f_{cl}(x) \!=\! f \!+\! G \left(- \lambda_0 \nabla V \!+\! \sum^N_{i=1} \lambda_i \nabla h_i \right) \,, 
\label{eq:closed_loop}
\end{align}
where $G(x) = g(x) g(x)^\mathsf{T}$ is a positive semi-definite matrix.
At an equilibrium point $x_e \in \mathcal{E}$, $f_{cl}(x_e) = 0$. Applying this condition to \eqref{eq:closed_loop} yields
\begin{align}
f(x_e) = G(x_e) \left( \lambda_0 \nabla V(x_e) - \sum^N_{i=1} \lambda_i \nabla h_i(x_e) \right)
\label{eq:equilibrium_condition}
\end{align}
%
%
Consider the region of the state space where the CLF constraint is {\it inactive}: 
$L_{f_{cl}(x)} V(x) + \gamma( V(x) ) - \delta^\star(x) < 0$. 
From \eqref{eq:KKT3}, $\lambda_0 = 0$. Then, using \eqref{eq:delta_solution}, notice that $\delta^\star(x) = 0$. 
At an equilibrium point $x_e \in \mathcal{E}$, $L_{f_{cl}(x_e)} V(x_e) = 0$, and therefore we obtain 
$\gamma( V(x_e)) < \delta^\star(x_e) = 0$, implying that 
$V(x_e) < 0$, which is a contradiction since $V$ is a nonnegative function. Therefore, all equilibrium points must lie on regions where the CLF constraint is {\it active}.\\
Consider the region where CLF constraint is {\it active}: 
$L_{f_{cl}(x)} V(x) + \gamma(V(x)) = \delta^\star(x)$. At an equilibrium point $x_e \in \mathcal{E}$, $L_{f_{cl}(x_e)} V(x_e) = 0$. Therefore, using \eqref{eq:delta_solution}, $\gamma(V(x_e)) = \delta^\star(x_e) = p^{-1} \lambda_0$.
Then, at any equilibrium point $x_e \in \mathcal{E}$, the KKT multiplier associated to the CLF constraint is 
$\lambda_0(x_e) = p \gamma(V(x_e)) \ge 0$.
Therefore, equation \eqref{eq:equilibrium_condition} yields:
\begin{align}
f(x_e) \!=\! G(x_e) \!\left(\! p \nabla \overline{V}(x_e) \!-\! \sum^N_{i=1} \lambda_i \nabla h_i(x_e) \!\right)
\label{eq:explicit_equilibrium_condition}
\end{align}
where $\nabla \overline{V}(x_e) = \gamma(V(x_e)) \nabla V(x_e)$.
For the next two cases, the CLF constraint is assumed to be active.\\
{\bf Case 1.} Consider the region $\Omega^{clf}_{i} \subset \mathbb{R}^n$, where the CLF constraint is active and only the $i$-th CBF constraint is {\it active}: $L_{f_{cl}(x)} h_i(x) + \alpha(h_i(x)) = 0$. At $x_e \in \mathcal{E}$, $L_{f_{cl}(x_e)} h_i(x_e) = 0$, implying that $h_i(x_e) = 0$. Therefore, equilibrium points occurring in this region must lie on the boundary of the $i$-th safe set, that is, $x_e \in \partial \mathcal{C}_i$.
%
Next, we show that, under Assumption \ref{assumption:disjoint_barriers}, these equilibrium points can only occur when {\it only} the $i$-th CBF constraint is active.
Assume that $x_e \in \mathcal{E}$ occurs when two CBF constraints are active: that is, we have 
$L_{f_{cl}(x_e)} h_i(x_e) + \alpha(h_i(x_e)) = 0$ and 
$L_{f_{cl}(x_e)} h_j(x_e) + \alpha(h_j(x_e)) = 0$, for $i \ne j$, 
and therefore, since $L_{f_{cl}(x_e)} h_i(x_e) = 0$ and $L_{f_{cl}(x_e)} h_j(x_e) = 0$, we have
$h_i(x_e) = h_j(x_e) = 0$.
However, by Assumption \ref{assumption:disjoint_barriers}, this is a contradiction since $\partial \mathcal{C}_i \cap \partial \mathcal{C}_j = \emptyset$. 
The conclusion is that boundary equilibrium points on the $i$-th boundary are located in the set where {\it only} the $i$-th CBF constraint is active, denoted by $\Omega^{clf}_{i}$. That means that at $x_e \in \partial \mathcal{C}_i$, $\lambda_j = 0 \,, \,\, \forall j \ne i$.
Therefore, \eqref{eq:explicit_equilibrium_condition} reduces to 
\begin{align}
f(x_e) = G(x_e) \left( p \nabla \overline{V}(x_e) - \lambda_i \nabla h_i(x_e) \right)
\label{eq:practical_equilibrium_condition}
\end{align}
where $\lambda_i \ge 0$ is the KKT multiplier associated to the active CBF. Notice that \eqref{eq:practical_equilibrium_condition} is equivalent to $f_i(x,\lambda_i) = 0$, with $f_i$ defined by \eqref{eq:fi}.
Thus, in this case, the equilibrium point is on the boundary of the safe set and satisfies $f_i(x_e, \lambda_i) = 0$ 
for some $\lambda_i \ge 0$, demonstrating \eqref{eq:boundary_equilibrium}.\\
%
{\bf Case 2.} Consider the region $\Omega^{clf}_{\overline{cbf}} \subset \mathbb{R}^n$, where the CLF constraint is active and all CBF constraints are {\it inactive}: $L_{f_{cl}(x)} h_i(x) + \alpha( h_i(x) ) > 0$. From \eqref{eq:KKT4}, $\lambda_1 = \cdots = \lambda_N = 0$.
At an equilibrium point $x_e \in \mathcal{E}$, $L_{f_{cl}(x_e)} h_i(x_e) = 0$, implying that $h_i(x_e) > 0$.
Therefore, equilibrium points occurring in this region must lie in the interior of the safe set, that is, $x_e \in int(\mathcal{C})$.
Additionally, \eqref{eq:explicit_equilibrium_condition} must be satisfied with $\lambda_1 = \cdots = \lambda_N = 0$, which means that $f(x_e) = p G(x_e) \nabla \overline{V}(x_e)$, which is equivalent to $f_i(x_e, 0) = 0$. This demonstrates \eqref{eq:interior_equilibrium}. \myqed 
%
%
\end{pf}
\vspace{-5mm}
A similar version of \tref{theorem:existence_equilibria} was demonstrated in \cite{Reis_LCSS}, considering only one CBF. 
Therefore, combining stabilization and safety objectives with the CLF-CBF framework can introduce equilibrium points in 
the closed-loop system other than the CLF global minimum $x_0 \in \mathbb{R}^n$, some of them could even possibly be 
asymptotically stable \cite{Reis_LCSS}.
This is a known problem in CLF-CBF literature and was considered in other works as well, such as in \cite{tan2024undesired}, 
which has presented a similar characterization of the equilibrium points and has proposed a modified QP-based controller for \eqref{eq:affine_nonlinear}: 
$u(x) = \overline{u}(x) + u_{\text{nom}}(x)$, where $\overline{u}(x), u_{\text{nom}}(x) \in \mathbb{R}^m$ are feedback controllers to be designed as follows. 
Substituting $u(x)$ into \eqref{eq:affine_nonlinear} transforms the system dynamics into:
\begin{align}
\dot{x} &= \overline{f}(x) + g(x) \overline{u}(x)
\label{eq:transformed_affine}
\end{align}
where $\overline{f}(x) = f(x) + g(x) u_{\text{nom}}(x)$.
\vspace{-2mm}
\begin{assum}[CLF Condition]
\label{assumption:clf_condition}
Given a nonlinear control affine dynamical system \eqref{eq:affine_nonlinear}, the CLF $V(x)$ satisfies 
$L_{f} V(x) < 0 \,, \forall x \neq x_0 \in \mathbb{R}^n$.
\end{assum}
Let $u_{\text{nom}}(x)$ be a feedback controller chosen in such a way that \aref{assumption:clf_condition} holds for the CLF $V$ and system \eqref{eq:transformed_affine}.
This can always be done if the system \eqref{eq:affine_nonlinear}, { provided that it admits a valid CLF}:\\
{\bf (i)} In case $V$ satisfies \aref{assumption:clf_condition} for the original system \eqref{eq:affine_nonlinear}, $u_{\text{nom}}(x) = 0$ and $\overline{f}(x) = f(x)$.\\
{\bf (ii)} In case $V$ does not satisfy \aref{assumption:clf_condition} for the original system \eqref{eq:affine_nonlinear}, one can find $u_{\text{nom}}(x)$ such that $V$ satisfies \aref{assumption:clf_condition} for $\overline{f}(x) = f(x) + g(x) u_{\text{nom}}(x)$ of the transformed system \eqref{eq:transformed_affine}, for example, by using Sontag's formula \cite{Sontag1989}.\\
In \cite{tan2024undesired}, it was shown that with $u_{\text{nom}}(x)$ chosen in this way and with $\overline{u}(x)$ obtained from solving the QP \eqref{eq:QP_control} 
with the system model given by the transformed dynamics \eqref{eq:transformed_affine}, then the closed-loop system obtained from applying controller 
$u(x) = \overline{u}(x) + u_{\text{nom}}(x)$ into \eqref{eq:affine_nonlinear} has the following set of equilibrium points:
\begin{align}
\overline{\mathcal{E}} &= \left( \bigcup^N_{i=1} \overline{\mathcal{E}}_{\partial \mathcal{C}_i} \right) \cup \{x_0\} 
\label{eq:transf_equilibrium_set} \\ 
\overline{\mathcal{E}}_{\partial \mathcal{C}_i} &= \mathcal{E}_{\partial \mathcal{C}_i} \cap \{ x \in \mathbb{R}^n \,|\, L_g h_i(x) \ne 0 \} \nonumber
\end{align}
That is, interior equilibrium points other than the CLF minimum and boundary equilibrium points satisfying $L_g V(x) = 0$ do not exist.
However, the existence of boundary equilibrium points with $L_g V(x) \ne 0$ is not excluded.
Since $\overline{u}(x)$ is obtained through solving the QP \eqref{eq:QP_control} for the a new nonlinear control affine system 
\eqref{eq:transformed_affine}, the theory developed so far for the remaining closed-loop equilibrium points remains valid.
%
\begin{prop}
\label{prop:trans_clf_condition}
If \aref{assumption:clf_condition} holds for a CLF $V(x)$, it also holds for the transformed CLF $\overline{V}$ from \dref{eq:transCLF}.
\end{prop}
\vspace{-4mm}
\begin{pf}
If \aref{assumption:clf_condition} holds for $V(x)$, then $L_f V = \inner{\nabla V}{f}{} < 0 \, \forall x \neq x_0$. 
Using \eqref{eq:nabla_barV}, for all $x \neq x_0$ we have $\nabla V = \gamma(V)^{-1} \nabla \overline{V}$ with $\gamma(V)^{-1} > 0$.
Then, $L_f V = \gamma(V)^{-1} \inner{\nabla \overline{V}}{f}{} < 0$ implies $L_f \overline{V} < 0$ for all $x \neq x_0$. \myqed 
\end{pf}
%
%

\section{Stability of Equilibrium Points in the CLF-CBF Framework}
\label{sec:stability}

{ Assuming the system admits a valid CLF}, one can always apply the technique proposed in \cite{tan2024undesired} and work with the transformed system \eqref{eq:transformed_affine}. Therefore, from this point on, we assume to be working with a nonlinear control affine system such that \aref{assumption:clf_condition} is satisfied.
Consequently, only the remaining boundary equilibrium points with $L_g h(x) \neq 0$ need to be addressed.
%
%
Our objective in this section is to study the stability properties of these points.
Particularly, we generalize \cite[Theorem 2]{Reis_LCSS} for nonlinear control affine systems, deriving a necessary and sufficient condition for 
the instability of boundary equilibrium points satisfying $L_g h(x) \neq 0$, when multiple CBF constraints are present.

\begin{defn}
\label{def:z_vectors}
Define the two vector fields $z_1, z_2: \mathbb{R}^n \rightarrow \mathbb{R}^n$ associated to the $i$-th barrier as
\begin{align}
z_1(x) &= \frac{\nabla h_i}{\norm{\nabla h_i}{G}}\,, 
\label{eq:z1} \\
z_2(x) &= \nabla V - \inner{\nabla V}{z_1}{G} \, z_1 \label{eq:z2} \,,
\end{align}
where $G(x) = g(x) g(x)^\mathsf{T}$. 
\end{defn}
One can verify that $\{z_1, z_2\}$ is an orthogonal set of vectors with respect to the inner product $\inner{\cdot}{\cdot}{G}$, that is, 
$\inner{z_i}{z_j}{G} = 0$, $i,j \in \{1,2\}$, $i \ne j$. In particular, $\inner{z_1}{z_1}{G} = 1$.
Furthermore, define the scalar function $\eta: \mathbb{R}^n \rightarrow \mathbb{R}_+$ as
\begin{align}
\eta(x) &= (1 + p \inner{z_2}{z_2}{G})^{-1}
\label{eq:eta}
\end{align}
with the following properties:
\begin{enumerate}[label=(\roman*)]
\item $0 < \eta(x) \le 1 \,\, \forall x \in \mathbb{R}^n$, since $\inner{z_2}{z_2}{G} \ge 0$.
\item $\eta = 1$ if and only if (i) $z_2 = 0$ or (ii) $G z_2 = 0$.
\item From \eqref{eq:eta}, the inner product $\inner{z_2}{z_2}{G}$ can be expressed as $\inner{z_2}{z_2}{G} = p^{-1} (\eta^{-1} - 1) \ge 0$,
\item Combining \eqref{eq:z2} and (iii), it is possible to demonstrate
$p^{-1} + \norm{\nabla V}{G}^2 = \inner{\nabla V}{\hat{z}_1}{G}^2 + p^{-1} \eta^{-1}$.
\end{enumerate}
\begin{lem}[Boundary Jacobian]
\label{lemma:jacobian_boundary}
Under Assumption \ref{assumption:disjoint_barriers}, the Jacobian matrix $J_{cl}(x_e) \in \mathbb{R}^{n \times n}$ of the closed-loop system \eqref{eq:feedback_system} 
computed at a boundary equilibrium point $x_e \in \mathcal{E}_{\partial \mathcal{C}_i}$ with $L_g h_i(x_e) \ne 0$ and corresponding KKT multiplier $\lambda_i \ge 0$ is given by
\begin{align}
J_{cl}(x_e) &\!=\! \left( I_n \!-\! G Z N_1 Z^\mathsf{T} \right) J_i(x_e, \lambda_i)
\!-\! G Z N_1 \Psi Z^\mathsf{T}
\label{eq:jacobian_boundary}
\end{align}
where $Z(x) = [\, z_1 \,\,\, z_2 \,] \in \mathbb{R}^{n \times 2}$ with $z_i(x)$ as defined in \eqref{eq:z1}-\eqref{eq:z2}, 
$N_1(x) = \Diag{1, p \eta(x)} > 0$ with $\eta(x)$ as defined in \eqref{eq:eta}. 
%
Matrices $J_{i}$ and $\Psi$ are given by
\begin{align}
J_{i}(x, \lambda) &= \frac{\partial f(x)}{\partial x} + \lambda_i \frac{\partial G \nabla h_i}{\partial x} - p \gamma(V) \frac{\partial G \nabla V}{\partial x}
\label{eq:Ji} \\
\Psi &= \begin{bmatrix}
\alpha^{\prime}(h_i) & 0 \\
\inner{\nabla V}{\hat{z}_1}{G} (\gamma^{\prime}(V) - \alpha^{\prime}(h_i)) & \gamma^{\prime}(V)
\end{bmatrix} \label{eq:Psi} 
\end{align}
\end{lem}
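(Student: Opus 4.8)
The plan is to differentiate the closed-loop vector field $f_{cl}(x) = f + G(-\lambda_0 \nabla V + \lambda_i \nabla h_i)$ at the boundary equilibrium $x_e$, using the fact that on the active set the KKT multipliers are determined implicitly by the equations $L_{f_{cl}} V + \gamma(V) = \delta^\star$ and $L_{f_{cl}} h_i + \alpha(h_i) = 0$ (together with $\delta^\star = \lambda_0/p$). First I would reduce to the case where only the $i$-th CBF is active (justified via Assumption~\ref{assumption:disjoint_barriers} exactly as in the proof of \tref{theorem:existence_equilibria}), so that locally $f_{cl} = f - \lambda_0 G\nabla V + \lambda_i G\nabla h_i$ with $\lambda_0, \lambda_i$ scalar functions of $x$. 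Then $J_{cl} = \partial_x f_{cl}$ splits into a ``frozen-multiplier'' part plus correction terms coming from $\partial_x \lambda_0$ and $\partial_x \lambda_i$. At $x_e$ we have $f_{cl}(x_e)=0$, and by \tref{theorem:existence_equilibria} (Case 2) $\lambda_0(x_e) = p\gamma(V(x_e))$, so $-\lambda_0 G\nabla V = -p\gamma(V)G\nabla V = -pG\nabla\overline V$; the frozen-multiplier Jacobian is therefore precisely $J_i(x_e,\lambda_i)$ as defined in \eqref{eq:Ji}.

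The core computation is obtaining $\nabla\lambda_0$ and $\nabla\lambda_i$ at $x_e$. Differentiating the two active-constraint identities and using $f_{cl}(x_e)=0$ (which kills all terms where a derivative hits $V$, $h_i$, $\alpha$, or $\gamma$ through $f_{cl}$ via the chain rule, leaving only the terms where the derivative hits $f_{cl}$ itself) gives a $2\times 2$ linear system for $(\nabla\lambda_0, \nabla\lambda_i)$ in terms of $\nabla h_i^\mathsf{T} J_i$ and $\nabla V^\mathsf{T} J_i$ (and the $\alpha'(h_i)$, $\gamma'(V)$ terms coming from differentiating $\alpha(h_i)$ and $\gamma(V)$). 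The matrix of that system is built from the Gram-type quantities $\norm{\nabla h_i}{G}^2$, $\inner{\nabla h_i}{\nabla V}{G}$, $\norm{\nabla V}{G}^2$, and the slack-variable coupling $\delta^\star = \lambda_0/p$ contributes the ``$p^{-1}$'' that appears in property~(iv) of $\eta$ in \eqref{eq:eta}. Here is where the change of basis to $\{z_1, z_2\}$ does the work: expressing $\nabla h_i$ and $\nabla V$ in the $G$-orthogonal frame $Z = [z_1\ z_2]$ diagonalizes the Gram matrix, and the inverse of the $2\times 2$ system becomes $N_1 = \Diag{1,\, p\eta}$ after using properties (iii)-(iv) of $\eta$. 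Substituting back, $-\lambda_0\nabla(G\nabla V)$-type corrections assemble into $-GZN_1 Z^\mathsf{T} J_i$ (the projection-like term acting on $J_i$) and the $\alpha', \gamma'$ derivative terms assemble into $-GZN_1\Psi Z^\mathsf{T}$, with $\Psi$ as in \eqref{eq:Psi}; the off-diagonal $\inner{\nabla V}{\hat z_1}{G}(\gamma'-\alpha')$ entry of $\Psi$ is exactly the cross-term that arises because $\nabla V$ has a nonzero $z_1$-component while $\nabla h_i$ is aligned with $z_1$.

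The main obstacle will be the bookkeeping in the $2\times 2$ elimination: keeping straight which chain-rule terms survive at $x_e$ (everything multiplying $f_{cl}(x_e)=0$ drops, but one must be careful that $\partial_x(G\nabla h_i)$ etc. are full Jacobians, not just $G\,\partial_x\nabla h_i$), correctly handling the slack variable $\delta$ so that the CLF constraint contributes an extra $p^{-1}$ relative to the CBF constraint, and verifying that the inverse of the resulting $2\times2$ matrix is genuinely $N_1$ (this is where properties (iii) and (iv) of $\eta$, and the identity $\inner{z_1}{z_1}{G}=1$, are essential). Once $\nabla\lambda_0$ and $\nabla\lambda_i$ are in hand, collecting terms into the stated form \eqref{eq:jacobian_boundary} is mechanical. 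I would also double-check the degenerate sub-case $z_2 = 0$ (i.e.\ $\nabla V \parallel_G \nabla h_i$), where $\eta = 1$ and the formula should still hold, as a consistency check, and confirm that the expression reduces to \cite[Theorem 2]{Reis_LCSS} when $n$, $G$, $V$, $h_i$ are specialized to the single-CBF LTI setting treated there.
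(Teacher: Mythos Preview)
Your plan is correct and follows essentially the same route as the paper's own proof: differentiate the closed-loop field, isolate the frozen-multiplier Jacobian $J_i$, obtain $\partial_k\lambda_0,\partial_k\lambda_i$ from a $2\times 2$ linear system built from the active constraints, simplify at $x_e$ using $f_{cl}(x_e)=0$ and $\lambda_0(x_e)=p\gamma(V(x_e))$, and then rewrite everything in the $G$-orthogonal frame $Z=[z_1\ z_2]$ using properties (iii)--(iv) of $\eta$. The only cosmetic difference is that the paper sets up the $2\times 2$ system as $C(x)\bar\lambda_i = b(x)$ with $b = [L_fV+\gamma(V),\,-(L_fh_i+\alpha(h_i))]^\mathsf{T}$ (i.e.\ using $L_f$ rather than $L_{f_{cl}}$, since the control has already been substituted), differentiates that to get $\partial_k\bar\lambda_i = C^{-1}(\partial_k b - (\partial_k C)\bar\lambda_i)$, and only passes to the $z_1,z_2$ basis at the very end; this avoids having $\partial_k f_{cl}$ appear on both sides of the equation, so the elimination is slightly cleaner than differentiating $L_{f_{cl}}V+\gamma(V)=\delta^\star$ directly as you describe, but the two are algebraically equivalent.
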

\vspace{-7mm}
\begin{pf}
\label{sec:proof_jacobian_boundary}
This demonstration is a direct continuation of the proof of Theorem \ref{theorem:existence_equilibria}.
In {\bf Case 3}, one can substitute \eqref{eq:control_solution} and \eqref{eq:delta_solution} into the complementary 
slackness conditions \eqref{eq:KKT3}-\eqref{eq:KKT4} and use the fact that $\lambda_j = 0$ for all $j \ne i$ to get 
the following system:
\begin{align}
\underbrace{\begin{bmatrix}
p^{-1} + \norm{\nabla V}{G}^2 \!&\! - \inner{\nabla V}{\nabla h_i}{G} \\
-\inner{\nabla V}{\nabla h_i}{G} \!&\! \norm{\nabla h_i}{G}^2
\end{bmatrix}}_{C(x)}
\underbrace{\begin{bmatrix}
\lambda_0 \\
\lambda_i
\end{bmatrix}}_{\bar{\lambda}_i}
\!=\!
\underbrace{\begin{bmatrix}
F_V \\
- F_{h_i}
\end{bmatrix}}_{b(x)}
\label{eq:lambda_solutions}
\end{align}
where $F_V = L_f V + \gamma(V)$ and $F_{h_i} = L_f h_i + \alpha(h_i)$. 
In particular, the set $\Omega^{clf}_{i}$ where the CLF and only the $i$-th CBF constraint is active is given by
\begin{align}
\Omega^{clf}_{i} 
= \{ x \in \mathbb{R}^n \,|\, & f_{cl}(x) \!=\! f \!+\! G \left( \lambda_i \nabla h_i - \lambda_0 \nabla V \right) \}
\label{eq:Si_set}
\end{align}
where $\lambda_0, \lambda_i$ are the solutions of \eqref{eq:lambda_solutions}.
%
%
The determinant of $C(x)$ is given by 
$|C(x)| = ( p^{-1} + \norm{\nabla V}{G}^2 )\norm{\nabla h_i}{G}^2 - \inner{\nabla V}{\nabla h_i}{G}^2$, 
which can be simplified by using the  definition of $z_2$ in \eqref{eq:z2}, yielding 
$|C(x)| = (p \eta)^{-1} \norm{\nabla h_i}{G}^2$. Notice that $|C(x)| \ge 0$, being zero if and only if $L_g h_i(x) = 0$.
Since we are considering the case $L_g h_i(x) \neq 0$, an expression for the inverse of $C(x)$ is then given by
\begin{align}
C(x)^{-1} &= \frac{p \eta}{\norm{\nabla h_i}{G}^2}
\begin{bmatrix}
\norm{\nabla h_i}{G}^2 & \inner{\nabla V}{\nabla h_i}{G} \\
\inner{\nabla V}{\nabla h_i}{G} & p^{-1} + \norm{\nabla V}{G}^2
\end{bmatrix}
\label{eq:inverseA}
\end{align}
The derivative of \eqref{eq:lambda_solutions} with respect to the $k$-th state component $x_k$ yields
\begin{align}
\partial_k C(x) \bar{\lambda}_i(x) \!+\! C(x) \partial_k \bar{\lambda}_i(x) = \partial_k b(x)
\label{eq:linear_system_derivative}
\end{align}
where the operator $\partial_k$ denotes the partial derivative with respect to $x_k$. In the case where 
$L_g h(x) \ne 0$, $|C(x)| \neq 0$ and therefore the inverse \eqref{eq:inverseA} can be used to directly 
solve \eqref{eq:linear_system_derivative} for the partial derivatives of $\partial_k \bar{\lambda}_i(x)$:
\begin{align}
\partial_k \bar{\lambda}_i(x) &= C(x)^{-1} \underbrace{ \left( \partial_k b(x) - \partial_k C(x) \bar{\lambda}_i(x) \right) }_{c(x)}
\label{eq:partial_lambda} 
\end{align}
To find expressions for $\partial_k \lambda_0(x)$, $\partial_k \lambda_i(x)$ using \eqref{eq:partial_lambda}, expressions for 
$\partial_k C(x)$ and $\partial_k b(x)$ must be derived. 
Matrix $\partial_k C(x)$ is dependent on $\partial_k\norm{\nabla V}{G}$, 
$\partial_k\norm{\nabla h_i}{G}$ and $\partial_k\inner{\nabla V}{\nabla h_i}{G}$.
Vector $\partial_k b(x)$ is dependent on $\partial_k F_V$ and $\partial_k F_{h_i}$.
These can be computed using the derivatives
\begin{align}
\partial_k\inner{u}{v}{G} &\!=\! \inner{\partial_k u}{v}{G} \!+\! \inner{\partial_k v}{u}{G} \!+\! u^\mathsf{T}(\partial_k G)v
\label{eq:partial_inner} \\
\partial_k F_w &\!=\! \inner{\partial_k \nabla w}{f}{} \!+\! \inner{\nabla w}{\partial_k f}{} \!+\! \beta^{\prime}(w) [\nabla w]_k
\label{eq:partial_F}
\end{align}
with $u(x),v(x) \in \mathbb{R}^n$ replaced by $\nabla V(x)$ or $\nabla h_i(x)$ in \eqref{eq:partial_inner}, 
and with $w(x) \in \mathbb{R}$ replaced by $V(x)$ or $h_i(x)$ 
and $\beta$ replaced by $\gamma$ of class $\mathcal{K}$ or $\alpha$ of class $\mathcal{K}_\infty$ in \eqref{eq:partial_F}.
Using \eqref{eq:partial_inner}-\eqref{eq:partial_F}, an expression for $c(x)$ can be found:
\begin{align}
c(x) &=
\begin{bmatrix}
\inner{\partial_k \nabla V}{f_{cl}}{} \!+\! \inner{\nabla V}{j_{k}}{} \!+\! \gamma^{\prime}\!(V) \partial_k V \\
- \inner{\partial_k \nabla h_i}{f_{cl}}{} \!-\! \inner{\nabla h_i}{j_{k}}{} \!-\! \alpha^{\prime}\!(h_i) \partial_k h_i \\
\end{bmatrix} 
\label{eq:c_vector} \\
j_{k} &= \partial_k f + \lambda_i \partial_k(G \nabla h_i) - \lambda_0 \partial_k (G \nabla V) \nonumber
\end{align}
At a boundary equilibrium point $x_e \in \mathcal{E}_{\partial \mathcal{C}_i}$, $f_{cl}(x_e) = 0$ and $\lambda_0 = p \gamma(V(x_e))$, 
simplifying \eqref{eq:c_vector} and allowing \eqref{eq:partial_lambda} to be written as
\begin{align}
\partial_k \bar{\lambda}_i(x_e) &\!=\! 
C(x)^{-1} \!\!
\begin{bmatrix}
\nabla V^\mathsf{T} [J_i(x_e, \lambda_i)]_k
\!+\! \gamma^{\prime}\!(V) \partial_k V 
\\
- \nabla h_i^\mathsf{T} [J_i(x_e, \lambda_i)]_k
\!-\! \alpha^{\prime}\!(h_i) \partial_k h_i
\end{bmatrix}
\label{eq:partial_lambda2}
\end{align}
where $[J_i(x_e, \lambda_i)]_k$ denotes the $k$-th column of matrix $J_i(x_e, \lambda_i)$ defined at \eqref{eq:Ji}.
Using $C(x_e)^{-1}$ from \eqref{eq:inverseA}, expressions for $\partial_k \lambda_0(x_e)$ and $\partial_k \lambda_i(x_e)$ follow from \eqref{eq:partial_lambda2}. 

Equation \eqref{eq:closed_loop} with $\lambda_j = 0\,, \forall j \ne i$ gives the closed-loop system expression for {\bf Case 3}, 
which is $f_{cl}(x) = f(x) - \lambda_0 G(x) \nabla V(x) + \lambda_i G(x) \nabla h_i(x)$. 
Differentiating yields 
$\partial_k f_{cl}(x) = j_{k} - (\partial_k \lambda_0) G \nabla V + (\partial_k \lambda_i) G \nabla h_i$.
At the boundary equilibrium point $x_e \in \mathcal{E}_{\partial \mathcal{C}_i}$, $\lambda_0 = p \gamma(V(x_e))$ and 
$\partial_k f_{cl}(x_e)$ can be written as
\begin{align}
\partial_k f_{cl}(x_e) \!=\! [J_i(x_e, \lambda_i)]_k 
&\!-\! \partial_k \lambda_0(x_e) G \nabla V \nonumber \\
&\!+\! \partial_k \lambda_i(x_e) G \nabla h_i
\label{eq:jacobian_column}
\end{align}
Substituting the expressions for $\partial_k \lambda_0(x_e)$ and $\partial_k \lambda_i(x_e)$ obtained from \eqref{eq:partial_lambda2} into \eqref{eq:jacobian_column}
yields an involved expression that can be greatly simplified by using the definitions of $z_1$, $z_2$ in \dref{def:z_vectors}, $\eta$ in \eqref{eq:eta} and property (iv) of $\eta$.
After simplifications, the resulting expression for the $k$-th column of the closed-loop Jacobian at the boundary equilibrium point $x_e \in \mathcal{E}_{\partial \mathcal{C}_i}$ is
\begin{align}
\partial_k f_{cl}(x_e) &\!=\! 
\left( \underbrace{ I_{n} \!-\! G z_1 z_1^\mathsf{T} \!-\! p \eta G z_2 z_2^\mathsf{T} }_{ I_n - G Z N_1 Z^\mathsf{T} } \right) \! [J_i(x_e, \lambda_i)]_k 
\label{eq:jacobian_column2} \\
& \qquad \qquad \qquad \qquad \qquad \quad - G [ Z N_1 \Psi Z^\mathsf{T} ]_k \nonumber
\end{align}
Then, letting $k \in \{1, \cdots, n\}$ and combining the $n$ partial derivatives as column vectors in \eqref{eq:jacobian_column2} to form the closed-loop Jacobian matrix $J_{cl}(x_e)$ yields \eqref{eq:jacobian_boundary}. \myqed
\end{pf}
\vspace{-4mm}
{
The computation of the closed expression for the closed-loop Jacobian \eqref{eq:jacobian_boundary} at a boundary equilibrium point $x_e \in \mathcal{E}_{\partial \mathcal{C}_i}$ is the first step towards determining its stability properties. Notice that it has a very particular structure, containing the Jacobian \eqref{eq:Jfi} and also matrices that closely resemble oblique projection matrix operators. The properties of these matrices and the following secondary lemma will be essential for deriving the main result of this section.
}
\begin{lem}
\label{lemma:basis} { Let $G(x)$ be positive semi-definite.}
Assume $L_g h_i(x) \ne 0$, and consider orthogonality with respect to the inner product $\inner{\cdot}{\cdot}{G(x)}$.
\begin{enumerate}[label=(\roman*)]
\item If $L_g V(x) \neq 0$, define the set $\mathcal{Z}$ as $\mathcal{Z} = \{ z_1, z_2 \}$.
\item If $L_g V(x) = 0$, define the set $\mathcal{Z}$ as $\mathcal{Z} = \{ z_1 \}$.
\end{enumerate}
Let the set $\mathcal{W} = \{ w_1, \cdots, w_{\dim{ \mathcal{W} }} \}$ be an orthonormal basis for 
$\Span{ \mathcal{Z} }^{\perp} \subset \mathbb{R}^{n}$, that is, $\inner{w_i}{w_j}{G} = \delta_{ij}$ for 
$i,j \in \{1, \cdots, \dim{\mathcal{W}} \}$ (since $\mathcal{W}$ is an orthonormal set) and 
$\inner{w_i}{z_k}{G} = 0$ for $i \in \{1, \cdots, \dim{ \mathcal{W} } \}$, $k \in \{1, 2\}$.
Then, in both cases, the set $\mathcal{B} = \mathcal{Z} \cup \mathcal{W}$ is an orthogonal basis for $\mathbb{R}^n$.
\end{lem}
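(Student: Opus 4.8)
The plan is to verify separately the two defining features of an orthogonal basis: (a) that the vectors of $\mathcal{B} = \mathcal{Z} \cup \mathcal{W}$ are pairwise $G$-orthogonal, and (b) that $\mathcal{B}$ is a basis of $\mathbb{R}^n$. Part (a) is essentially built into the definitions and costs nothing: $\inner{z_1}{z_2}{G} = 0$ holds by the Gram--Schmidt-type construction of $z_2$ in \eqref{eq:z2}, which is precisely the orthogonality already recorded right after \dref{def:z_vectors}; each $w_j \in \mathcal{W} \subset \Span{\mathcal{Z}}^{\perp}$ is $G$-orthogonal to every element of $\mathcal{Z}$ by definition of the orthogonal complement; and $\inner{w_i}{w_j}{G} = \delta_{ij}$ for $i \ne j$ is part of the hypothesis that $\mathcal{W}$ is $G$-orthonormal. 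Thus all distinct pairs in $\mathcal{B}$ are $G$-orthogonal in both cases (i) and (ii).

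For part (b) I would first note that every vector in $\mathcal{B}$ is nonzero ($\inner{z_1}{z_1}{G} = 1$, each $w_j$ is $G$-unit, and in case (i) also $z_2 \ne 0$; see the caveat below), so that $\mathcal{B}$, being a nonzero pairwise-$G$-orthogonal family, is linearly independent (this uses that $\inner{\cdot}{\cdot}{G}$ is nondegenerate, addressed below). It then only remains to count dimensions: $\mathcal{W}$ is a basis of $\Span{\mathcal{Z}}^{\perp}$, hence $|\mathcal{W}| = n - \dim \Span{\mathcal{Z}} = n - |\mathcal{Z}|$, so $|\mathcal{B}| = |\mathcal{Z}| + |\mathcal{W}| = n$; an independent family of $n$ vectors in $\mathbb{R}^n$ is a basis, and it is orthogonal by part (a), which is exactly the claim. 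Equivalently, one can show $\Span{\mathcal{Z}} \cap \Span{\mathcal{W}} = \{0\}$ directly: any $v$ in the intersection lies in $\Span{\mathcal{W}} \subset \Span{\mathcal{Z}}^{\perp}$ and in $\Span{\mathcal{Z}}$, so $\inner{v}{v}{G} = 0$; writing $v = c_1 z_1 + c_2 z_2$ and using $\inner{z_1}{z_1}{G} = 1$, $\inner{z_1}{z_2}{G} = 0$, $\inner{z_2}{z_2}{G} > 0$ forces $c_1 = c_2 = 0$, giving $\Span{\mathcal{Z}} \oplus \Span{\mathcal{W}} = \mathbb{R}^n$.

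The only genuinely delicate point --- and what I would flag as the main obstacle --- is the possible degeneracy of $\inner{\cdot}{\cdot}{G}$, since $G = gg^{\mathsf{T}}$ is in general only positive semidefinite. Two things must be handled. First, the very existence of a $G$-orthonormal basis $\mathcal{W}$ of $\Span{\mathcal{Z}}^{\perp}$ forces $G \succ 0$: a nonzero $w \in \mathcal{N}(G)$ would lie in $\Span{\mathcal{Z}}^{\perp}$ yet could not be normalized, so the hypotheses are vacuous outside the full-rank input case; granting $G \succ 0$, $\inner{\cdot}{\cdot}{G}$ is a bona fide inner product and the statements used above (independence, $\dim \Span{\mathcal{Z}}^{\perp} = n - |\mathcal{Z}|$) are standard. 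Second, in case (i) the condition $L_g V(x) \ne 0$ does not by itself preclude $z_2 = 0$, which happens exactly when $\nabla V \parallel \nabla h_i$; this subcase must be excluded for $\mathcal{Z}$ to be independent, and excluding it is harmless, because $z_2 = 0$ gives $\eta(x) = 1$ (property (ii) in the list following \eqref{eq:eta}) and reduces the closed-loop Jacobian of \lref{lemma:jacobian_boundary} to precisely the case-(ii) form, so nothing is lost by treating that situation under case (ii). With these two reductions in place --- $G$ nondegenerate and $\inner{z_2}{z_2}{G} > 0$ in case (i) --- the remainder is routine linear algebra.
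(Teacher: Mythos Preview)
Your proposal is correct and follows essentially the same approach as the paper: both establish pairwise $G$-orthogonality first and then deduce linear independence (the paper by explicitly pairing the dependence relation \eqref{eq:linear_independence1}--\eqref{eq:linear_independence2} against each vector of $\mathcal{B}$, you by the standard fact that nonzero mutually orthogonal vectors are independent together with a dimension count). Your discussion of the possible degeneracy of $G$ and of the $z_2 = 0$ subcase actually goes beyond the paper, which simply asserts ``since $z_2 \ne 0$'' in case (i) and never addresses semidefiniteness; those caveats are well-taken and your proposed reductions handle them correctly.
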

\vspace{-5mm}
\begin{pf}
First notice that $\mathcal{B} = \mathcal{Z} \cup \mathcal{W}$ is an orthogonal set in both cases,
since $\inner{z_1}{z_2}{G(x)} = 0$ and $\inner{w_i}{z_k}{G(x)} = 0$, $i \in \{1, \cdots, \dim{\mathcal{W}} \}$, $k \in \{1, 2\}$ by construction.
To prove that it is also a basis for $\mathbb{R}^n$, let the following be the linear independence equations for the vectors in $\mathcal{B}$ for $L_g V(x) \neq 0$ and $L_g V(x) = 0$, respectively:
\begin{align}
\beta_1 z_1 + \beta_2 z_2 + \sum^{n-2}_{i=1} \beta_{i+2} w_i = 0 \,, \quad L_g V(x) \neq 0
\label{eq:linear_independence1} \\
\beta_1 z_1 + \sum^{n-1}_{i=1} \beta_{i+1} w_i = 0 \,, \quad L_g V(x) = 0
\label{eq:linear_independence2}
\end{align}
Taking the inner product of \eqref{eq:linear_independence1}-\eqref{eq:linear_independence2} with $z_1$ yields $\beta_1 = 0$ in both cases, 
{ since $\inner{z_1}{z_1}{G} \ne 0$, $\inner{z_1}{z_2}{G} = 0$} and $\inner{w_i}{z_1}{G} = 0 \,, \forall i \in \{1, \cdots, \dim{\mathcal{W}}\}$.
Then, two different cases must be considered:\\ 
{\bf Case (i)}: $L_g V(x) \neq 0$, $\dim{\mathcal{W}} = n-2$. { In this case, $G z_2 \ne 0$}. Taking the inner product of \eqref{eq:linear_independence1} 
with $z_2$ yields $\beta_2 = 0$, since $\inner{z_1}{z_2}{G(x)} = 0$ and $\inner{w_i}{z_2}{G(x)} = 0$, $i \in \{1, \cdots, n-2\}$.
Taking the inner product of \eqref{eq:linear_independence1} with $w_j$, $j \in \{1,\cdots, n-2\}$ yields 
$\beta_3 = \cdots = \beta_{n} = 0$, since $\inner{z_1}{w_j}{G(x)} = \inner{z_2}{w_j}{G(x)} = 0$ and $\inner{w_i}{w_j}{G(x)} = \delta_{ij}$.\\
{\bf Case (ii)}: $L_g V(x) = 0$, $\dim{\mathcal{W}} = n-1$. { In this case, $G z_2 = 0$} and $z_2$ is not contained in $\mathcal{Z}$.
The absence of $z_2$ in $\mathcal{Z}$ is compensated by the presence of an extra basis vector for $\mathcal{W}$ appearing in the 
summation (since $\dim{\mathcal{W}} = n-1$ in this case).
Taking the inner product of \eqref{eq:linear_independence2} with $w_j$, $j \in \{1,\cdots, n-1\}$ yields 
$\beta_2 = \cdots = \beta_{n} = 0$, since $\inner{z_1}{w_j}{G(x)} = 0$ and $\inner{w_i}{w_j}{G(x)} = \delta_{ij}$.\\
Therefore, $\beta_1 = \beta_2 = \cdots = \beta_n = 0$ in both cases, and therefore the set $\mathcal{B} = \mathcal{Z} \cup \mathcal{W}$ 
forms a basis for $\mathbb{R}^n$.\myqed
\end{pf}
\vspace{-4mm}
{
Next, we present the main result of this section.}
\begin{thm}[Stability of Boundary Equilibria]
\label{thm:curvature}
Under Assumption \ref{assumption:disjoint_barriers}, consider a boundary equilibrium point 
$x_e \in \mathcal{E}_{\partial \mathcal{C}_i}$ of the closed-loop system \eqref{eq:feedback_system} 
with controller \eqref{eq:QP_control} such that $L_g h_i(x_e) \ne 0$, with corresponding $i$-th KKT 
multiplier given by $\lambda_e \ge 0$.
If there exists $v \in \{\nabla h_i(x_e)\}^\perp$ such that
\begin{align}
v^\mathsf{T} J_{f_i}(x_e, \lambda_e) v > 0 \,,
\label{eq:curvature_condition}
\end{align}
then $x_e$ is {\it unstable}. Otherwise, $x_e$ is {\it stable}. 
In particular, if $v^\mathsf{T} J_{f_i}(x_e, \lambda_e) v < 0 \,\, \forall v \in \{\nabla h_i(x_e)\}^\perp$, 
then $x_e$ is asymptotically stable.
The matrix function $J_{fi}$ is the Jacobian of the vector field $f_i$, as defined in \eqref{eq:Jfi}.
\end{thm}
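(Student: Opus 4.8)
The plan is to reduce the stability question of the closed-loop Jacobian $J_{cl}(x_e)$ in \eqref{eq:jacobian_boundary} to a spectral condition on the restriction of $J_{f_i}(x_e,\lambda_e)$ to a single hyperplane. The starting point is \lref{lemma:jacobian_boundary}: $J_{cl}(x_e) = (I_n - G Z N_1 Z^\mathsf{T}) J_i - G Z N_1 \Psi Z^\mathsf{T}$. I would first identify the invariant subspaces of this operator using the orthogonal basis $\mathcal{B} = \mathcal{Z}\cup\mathcal{W}$ constructed in \lref{lemma:basis} (taking case (i), since $L_g V(x_e)\ne 0$ is the relevant case — the degenerate case $L_g V(x_e)=0$ is already excluded by the reduction to the transformed dynamics \eqref{eq:transformed_affine} satisfying \aref{assumption:clf_condition}, or can be handled separately with $\mathcal{Z}=\{z_1\}$). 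The key structural observation is that $G Z N_1 Z^\mathsf{T}$ acts as a (non-orthogonal) projection-like operator onto $\Span\{G z_1, G z_2\}$, so $I_n - G Z N_1 Z^\mathsf{T}$ annihilates a distinguished direction. Concretely, since $Z^\mathsf{T} G z_1 = [\,\inner{z_1}{z_1}{G},\ \inner{z_2}{z_1}{G}\,]^\mathsf{T} = [1,0]^\mathsf{T}$, one finds $G Z N_1 Z^\mathsf{T}(G z_1) = G Z N_1 [1,0]^\mathsf{T} = G z_1$, hence $(I_n - G Z N_1 Z^\mathsf{T}) G z_1 = 0$. Similarly $(I_n - GZN_1Z^\mathsf{T})Gz_2 = Gz_2 - p\eta\,\inner{z_2}{z_2}{G}\,Gz_2 = \eta\, G z_2$ after using property (iii) of $\eta$.

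The crucial point is that $z_1 = \nabla h_i / \norm{\nabla h_i}{G}$ is parallel to $\nabla h_i$, so $G z_1 \parallel G\nabla h_i$, and the direction killed by $(I_n - GZN_1Z^\mathsf{T})$ is exactly $G\nabla h_i$. I would then show that the subspace $\mathcal{V} := \{\nabla h_i(x_e)\}^\perp$ (Euclidean orthogonal complement, dimension $n-1$) is "almost invariant" for $J_{cl}(x_e)$ in the sense relevant to stability, or — more precisely — argue that the one remaining direction not in $\mathcal{V}$ contributes a zero eigenvalue (or more delicately, that the boundary equilibrium lives on $\partial\mathcal{C}_i$ and the dynamics transverse to $\partial\mathcal{C}_i$ along $G\nabla h_i$ gives a neutral/stable mode coming from the CBF constraint being active, i.e. $\dot h_i = -\alpha(h_i)$ forcing $h_i\to 0$ exponentially on one side). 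The stability of $x_e$ is then governed by the eigenvalues of $J_{cl}(x_e)$ restricted to the complementary directions, and using the basis decomposition one shows $v^\mathsf{T} J_{cl}(x_e) v$ for $v\in\mathcal{V}$ reduces (after the projector kills the $\Psi$-term contributions along $Gz_1$ and rescales the $z_2$-contribution by $\eta>0$, which does not affect the sign of the quadratic form) to a positive multiple of $v^\mathsf{T} J_{f_i}(x_e,\lambda_e) v$. Here I would carefully relate $J_i$ from \eqref{eq:Ji} to $J_{f_i}$ from \eqref{eq:Jfi}: they differ only because $J_i$ uses $p\gamma(V)\,\partial_x(G\nabla V)$ while $J_{f_i}$ uses $p\,\partial_x(G\nabla\overline{V}) = p\,\partial_x(G\gamma(V)\nabla V)$; the extra term $p\gamma'(V)\,G\nabla V\nabla V^\mathsf{T}$ is rank-one along $G\nabla V = G z_1 \inner{\nabla V}{z_1}{G}/\ldots + Gz_2$-components, and its $z_1$-part is annihilated by the projector while its $z_2$-part is absorbed — this is precisely what the $\Psi$ matrix and property (iv) of $\eta$ bookkeep, so that on $\mathcal{V}$ the two Jacobians induce quadratic forms of the same sign.

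**The main obstacle** I anticipate is the sign/definiteness bookkeeping: the operator $J_{cl}(x_e)$ is \emph{not} symmetric and is \emph{not} obtained from $J_{f_i}$ by a congruence transformation, so one cannot naively conclude that $J_{cl}$ and $J_{f_i}|_{\mathcal{V}}$ have eigenvalues of the same sign — the correct statement must go through a careful argument (perhaps a Lyapunov-function / quadratic-form argument, or a continuity/homotopy argument in the parameter $p$, or showing $J_{cl}(x_e)$ is similar to a block-triangular matrix whose diagonal blocks are a $1\times 1$ zero/negative block and the $(n-1)\times(n-1)$ block $P_{\mathcal{V}} J_{f_i} P_{\mathcal{V}}$ up to positive scaling). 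I would handle this by working in the $G$-orthogonal basis $\mathcal{B}$: expressing $J_{cl}(x_e)$ in coordinates adapted to $\mathcal{B}$, the factor $I_n - GZN_1Z^\mathsf{T}$ becomes $\Diag\{0,\eta,1,\ldots,1\}$ (acting on the $Gz_1,Gz_2,Gw_1,\ldots$ frame), and the product with $J_i$ plus the $\Psi$-correction becomes block upper-triangular with the first row/block decoupling; the lower $(n-1)\times(n-1)$ block is then a positive-definite-congruence image of $J_{f_i}$ restricted to $\mathcal{V}$, from which \eqref{eq:curvature_condition} characterizes instability (existence of a positive eigenvalue) and its negation with strict inequality characterizes asymptotic stability (all eigenvalues in the open left half-plane on $\mathcal{V}$, plus the neutral/negative transverse mode). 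The boundary between "stable" and "asymptotically stable" — i.e. what happens when $v^\mathsf{T} J_{f_i} v \le 0$ with equality for some $v$ — corresponds to a zero eigenvalue of the restricted block and is declared merely "stable" (the higher-order terms would be needed to decide, which is beyond a linearization argument); I would state this carefully to match the theorem's phrasing.
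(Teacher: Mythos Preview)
Your structural observations are largely correct: the projector $I_n - GZN_1Z^\mathsf{T}$ does act as $\Diag{0,\eta,1,\ldots,1}$ on the frame $\{Gz_1, Gz_2, Gw_1,\ldots\}$, the difference between $J_i$ in \eqref{eq:Ji} and $J_{f_i}$ in \eqref{eq:Jfi} is exactly the rank-one term $p\gamma'(V)\,G\nabla V\nabla V^\mathsf{T}$ that $\Psi$ bookkeeps, and $\mathcal{V}=\{\nabla h_i(x_e)\}^\perp$ is indeed $J_{cl}$-invariant because $z_1$ is a \emph{left} eigenvector of $J_{cl}(x_e)$ with eigenvalue $-\alpha'(h_i(x_e))<0$ (the paper uses this identity explicitly for the stability half). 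So a block-triangular decomposition with one stable transverse mode split off does exist.

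The genuine gap is in your closing step. The relation you obtain between the restricted $(n{-}1)\times(n{-}1)$ block and $J_{f_i}|_{\mathcal{V}}$ is a one-sided multiplication by a positive diagonal, not a two-sided congruence, and neither operation preserves what you need for a non-symmetric matrix. More importantly, the theorem's hypothesis is a \emph{quadratic-form} condition on $J_{f_i}$ (equivalently, a sign condition on its symmetric part restricted to $\mathcal{V}$), and for non-symmetric matrices the existence of $v$ with $v^\mathsf{T} M v>0$ does \emph{not} imply that $M$ has an eigenvalue with positive real part. Hence you cannot deduce instability of $x_e$ from \eqref{eq:curvature_condition} by invoking the linearization/eigenvalue criterion on $J_{cl}|_{\mathcal{V}}$; the spectral route simply does not close the instability direction. (The stability direction fares better, since a negative-semidefinite symmetric part does force the spectrum into the closed left half-plane.)

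The paper resolves this by taking the Lyapunov--Chetaev route you mention only in passing. It sets $X=Z\Lambda_z Z^\mathsf{T}+W\Lambda_w W^\mathsf{T}>0$ with \emph{free} positive diagonals $\Lambda_z,\Lambda_w$, computes $Y=J_{cl}^\mathsf{T} X+XJ_{cl}$, and shows that for $v\perp\nabla h_i(x_e)$ one has $v^\mathsf{T} Y v = 2\,v^\mathsf{T}\overline{X}\,J_{f_i}(x_e,\lambda_e)\,v$, where $\overline{X}\ge 0$ has null direction $Gz_1$ but is otherwise tunable. The freedom in $\Lambda_z,\Lambda_w$ is then used to make the \emph{given} $v$ an eigenvector of $\overline{X}$ with some $\lambda(\overline{X})>0$, yielding $v^\mathsf{T} Y v=2\lambda(\overline{X})\,v^\mathsf{T} J_{f_i} v>0$ and hence instability by Chetaev --- bypassing any eigenvalue claim about $J_{cl}$. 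For stability the paper writes $\Delta x=\beta z_1+v$ with $v\in\mathcal{V}$, uses $z_1^\mathsf{T} J_{cl}=-\alpha'(h_i)z_1^\mathsf{T}$ to get the decoupled mode $\dot\beta=-\alpha'(h_i)\beta$, and builds the Lyapunov function $\tfrac12\beta^2+v^\mathsf{T} X v$ whose derivative is eventually dominated by the same $v^\mathsf{T} Y v\le 0$. One minor correction: the reduction via \eqref{eq:transformed_affine} removes boundary equilibria with $L_g h_i(x_e)=0$, not $L_g V(x_e)=0$; the paper handles both $L_g V$ cases inside the proof.
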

\vspace{-4mm}
\begin{pf}
\label{sec:proof_curvature}
Consider a boundary equilibrium point $x_e \in \mathcal{E}_{\partial \mathcal{C}_i}$ with $L_g h_i(x_e) \ne 0$.
%
%
The corresponding Lyapunov equation for $x_e$ is then given by
%
%
\begin{align}
Y &= J_{cl}(x_e)^\mathsf{T} X + X J_{cl}(x_e)
\label{eq:lyap_equation} 
\end{align}
where $J_{cl}(x_e)$ is expressed by \eqref{eq:jacobian_boundary}.
Define
\begin{align}
X &= Z \Lambda_z Z^\mathsf{T} + W \Lambda_w W^\mathsf{T} > 0
\label{eq:X}
\end{align}
where the columns of matrices $Z$ and $W$ are given by the basis vectors of $\mathcal{Z}$ and $\mathcal{W}$ of \lref{lemma:basis}.
Matrices $\Lambda_z$, $\Lambda_w$ are diagonal and positive definite, with dimensions combatible to the dimensions of 
$\mathcal{Z}$ and $\mathcal{W}$ from \lref{lemma:basis}, that is,
(i) if $L_g V(x_e) \neq 0$, $\dim{ \mathcal{Z} } = 2$, $\dim{ \mathcal{W} } = n-2$, 
(ii) if $L_g V(x_e) = 0$, $\dim{ \mathcal{Z} } = 1$, $\dim{ \mathcal{W} } = n-1$.
From the properties of vectors \eqref{eq:z1}-\eqref{eq:z2} and of the subspace $\mathcal{W}$, we have 
(i) $Z^\mathsf{T} G Z = diag\{ 1, p^{-1}(\eta^{-1} - 1) \}$, 
(ii) $Z^\mathsf{T} G W = 0$ and 
(iii) $W^\mathsf{T} G W = I_{n-2}$ if $L_g V(x_e) \neq 0$ and $W^\mathsf{T} G W = I_{n-1}$ if $L_g V(x_e) = 0$.
Substituting the closed-system Jacobian \eqref{eq:jacobian_boundary} and \eqref{eq:X} in \eqref{eq:lyap_equation} and using 
the properties (i), (ii) and (iii) for matrices $Z$ and $W$, it is possible to write $Y$ in the following way, no matter the 
case considered ($L_g V(x_e)\neq 0$ or $L_g V(x_e) = 0$):
\begin{align}
Y &= J_i^\mathsf{T} \overline{X} + \overline{X} J_i - Z \Omega Z^\mathsf{T} 
\label{eq:lyap_equation2}
\end{align}
where $\Omega =\Omega^\mathsf{T} \ge 0$. 
The expressions for $\overline{X}$, $\Omega$ and $Z$ depend on the considered case:\\
{\bf Case (i)}: if $L_g V(x_e) \neq 0$, $\dim{ \mathcal{Z} } = 2$, $Z = [\, z_1 \,\,\, z_2 \,] \in \mathbb{R}^{n \times 2}$,
$\Lambda_{z} = \Diag{\lambda_{z_1}, \lambda_{z_2}}$. 
In \eqref{eq:lyap_equation2}, $\overline{X} = \eta \lambda_{z_2} z_2 z_2^\mathsf{T} + W \Lambda_w W^\mathsf{T} \ge 0$
and $\Omega = \Psi^\mathsf{T} N \Lambda_z + \Lambda_z N \Psi$, with $N = \Diag{1,1-\eta}$.
Here, $W = [\, w_{1} \,\, \cdots \,\, w_{n-2} \,] \in \mathbb{R}^{n \times (n-2)}$.\\
{\bf Case (ii)}: if $L_g V(x_e) = 0$, $\dim{ \mathcal{Z} } = 1$, $Z = z_1 \in \mathbb{R}^{n}$,
$\Lambda_{z} = \lambda_{z_1} \in \mathbb{R}_+$.
In \eqref{eq:lyap_equation2}, $\overline{X} = W \Lambda_w W^\mathsf{T} \ge 0$
and $\Omega = 2 \lambda_{z_1} \alpha^{\prime}(h_i)$.
Here, $W = [\, w_{1} \,\, \cdots \,\, w_{n-1} \,] \in \mathbb{R}^{n \times (n-1)}$.\\

\vspace{-5mm}
In both cases, $\overline{X}$ has no term $z_1 z_1^\mathsf{T}$.
By Chetaev's instability theorem, $x_e$ is unstable if there exists $v \in \mathbb{R}^{n}$ such that $v^\mathsf{T} Y v > 0$ in \eqref{eq:lyap_equation2} \cite{Khalil2002}.
Then, the quadratic form $v^\mathsf{T} Y v$ yields
\begin{align}
v^\mathsf{T} Y v = 2 v^\mathsf{T} \overline{X} J_i v - v^\mathsf{T} Z \Omega Z^\mathsf{T} v
\label{eq:yYy}
\end{align}
Let $v \in \{ \nabla h_i(x_e)\}^{\perp}$.
Then:\\
{\bf Case (i)}: if $L_g V(x_e) \neq 0$, the second term on the right-hand side of \eqref{eq:yYy} becomes
\begin{align}
v^\mathsf{T} Z \Omega Z^\mathsf{T} v 
&= v^\mathsf{T} \underbrace{ \left( \gamma^{\prime}(V) \sigma_{z_2} (1 - \eta) z_2 z_2^\mathsf{T} \right) }_{ p \gamma^{\prime}(V) \overline{X} G z_2 z_2^\mathsf{T} } v
\label{eq:second_term} \\
&= v^\mathsf{T} \left( p \gamma^{\prime}(V) \overline{X} G \nabla V \nabla V^\mathsf{T} \right) v
\end{align}
where we have used the fact that $\inner{w_i}{z_2}{G} = 0$ and 
property (iii) of \eqref{eq:eta}.
Then, \eqref{eq:yYy} can be rewritten as
\begin{align}
v^\mathsf{T} Y v = 2 v^\mathsf{T} \overline{X} 
\underbrace{ \left( J_i - p \gamma^{\prime}(V) G \nabla V \nabla V^\mathsf{T} \right) }_{J_{f_i}(x_e, \lambda)} v
\label{eq:yYy2}
\end{align}
{\bf Case (ii)}: if $L_g V(x_e) = 0$, the second term on the right-hand side of \eqref{eq:yYy} vanishes, since $Z = z_1$ and therefore, $Z^\mathsf{T} v = 0$.
Then, { using the fact that $G z_2 = G \nabla V = 0$ in this case}, \eqref{eq:yYy} yields
\begin{align}
v^\mathsf{T} Y v = 2 v^\mathsf{T} \overline{X} J_i v 
= 2 v^\mathsf{T} \overline{X} \left( J_{f_i}(x_e, \lambda)\big|_{G \nabla V = 0} \right) v
\label{eq:yYy3}
\end{align}
Therefore, { both cases $L_g V(x_e) \neq 0$ and $L_g V(x_e) = 0$} result in $v^\mathsf{T} Y v = 2 v^\mathsf{T} \overline{X} J_{f_i}(x_e, \lambda) v$.
Given any matrix $M$, since $\overline{X}$ is symmetric and positive semi-definite, matrices $\overline{X} M$ and $\overline{X}^{\frac{1}{2}} M \overline{X}^{\frac{1}{2}}$ share the same nonzero eigenvalues. Since the spectra of $\overline{X}^{\frac{1}{2}} M \overline{X}^{\frac{1}{2}}$ is real, so is the spectra of $\overline{X} M$.
Since $\overline{X} \ge 0$ is arbitrary, with its one-dimensional nullspace spanned by $G \, z_1 \ne 0$, it is always possible to choose $\Lambda_z$ and $\Lambda_w$ in such a way that $v \in \{ \nabla h_i(x_e)\}^{\perp}$ is an eigenvector of $\overline{X}$ with a corresponding strictly positive eigenvalue $\lambda(\overline{X}) > 0$. Then, $v^\mathsf{T} Y v$ yields
\begin{align}
v^\mathsf{T} Y v = 2 \lambda(\overline{X}) v^\mathsf{T} J_{f_i}(x_e, \lambda) v
\label{eq:yYy4}
\end{align}
Then, $x_e \in \mathcal{E}_{\partial \mathcal{C}_i}$ is unstable if the right-hand side of \eqref{eq:yYy4} is strictly positive, demonstrating \eqref{eq:curvature_condition}.\\
To show that $x_e \in \mathcal{E}_{\partial \mathcal{C}_i}$ is locally stable otherwise, we proceed as follows.
The first order Taylor series approximation of the closed-loop system on a neighborhood of $x_e$ is
$\dot{x} = J_{cl}(x_e) \Delta x$ with $\Delta x = (x - x_e)$ being a disturbance vector around the equilibrium point. Let us write this disturbance vector using the basis $\{ z_1(x_e), v_1, \cdots, v_{n-1}(x_e) \}$, where the $v_1, \cdots, v_{n-1}$ are fixed basis vectors for 
$\{ \nabla h_i(x_e) \}^\perp$. Therefore, $\Delta x = \beta z_1(x_e) + v$, with 
$v = \sum^{n-1}_{i=1} \beta_i v_i$. Note that $v^\mathsf{T} \nabla h_i(x_e) = 0$ by construction.
Here, $\beta, \beta_1, \cdots, \beta_{n-1}$ represent the coordinates of $\Delta x$ in the new basis.
Computing the inner product $\inner{z_1(x_e)}{\dot{x}}{}$ yields
\begin{align}
\inner{z_1(x_e)}{\dot{x}}{} &= 
\inner{z_1}{\dot{\beta} z_1 + \sum^{n}_{i=1} \dot{\beta_i} v_i}{} = \dot{\beta} \norm{z_1(x_e)}{}^2 \nonumber \\
&= z_1^\mathsf{T} J_{cl}(x_e) (\beta z_1(x_e) + v)
\label{eq:inner_product}
\end{align}
Since $z_1^\mathsf{T} J_{cl}(x_e) = -\alpha^{\prime}(h_i(x_e)) z_1(x_e)^\mathsf{T}$ in \eqref{eq:inner_product}, the dynamics of $\beta$ is given by
$\dot{\beta} = -\alpha^{\prime}(h_i(x_e)) \beta$.
Since $\alpha^{\prime}$ is a $\mathcal{K}_\infty$ function, $\alpha^{\prime}(h_i(x_e)) > 0$, which means that $\beta \rightarrow 0$. Replacing the dynamics of $\beta$ into the Taylor expansion yields the following dynamics for $v$:
\begin{align}
\dot{v} &= \beta \left(J_{cl}(x_e) + \alpha^{\prime}(h_i) I_n \right) z_1 + J_{cl}(x_e) v
\label{eq:v_dynamics}
\end{align}
%
%
Define the Lyapunov candidate $V(\beta, v) = \frac{1}{2} \beta^2 + v^\mathsf{T} X v > 0$, with $X$ given by \eqref{eq:X}. Taking its time derivative and using the dynamics of $\beta$ and \eqref{eq:v_dynamics} yields
\begin{align}
\dot{V} &= -\alpha^{\prime}(h_i) \beta^2 + v^\mathsf{T} Y v  \label{eq:dotV_stability} \\
&+ 2 v^\mathsf{T} X \left( J_{cl}(x_e) + \alpha^{\prime}(h_i) I_n \right) z_1(x_e) \beta \nonumber
\end{align}
Equation \eqref{eq:dotV_stability} shows that, since the dynamics of $\beta$ is decoupled and asymptotically stable, the sign of $\dot{V}$ is eventually determined by the term 
$v^\mathsf{T} Y v$. 
By \eqref{eq:yYy4}, if $v^\mathsf{T} J_{f_i}(x_e, \lambda) v \le 0 \,\, \forall v \in \{\nabla h_i(x_e)\}^{\perp}$, then $v^\mathsf{T} Y v \le 0$, and $x_e$ is stable.
%
In particular, if $v^\mathsf{T} J_{f_i}(x_e, \lambda) v < 0 \,\, \forall v \in \{\nabla h_i(x_e)\}^{\perp}$,
then $x_e$ is asymptotically stable. \myqed 
\end{pf}
\vspace{-4mm}
{
Through condition \eqref{eq:curvature_condition}, \tref{thm:curvature} provides a test for determining the stability of a boundary equilibrium point $x_e \in \mathcal{E}_{\partial \mathcal{C}_i}$: its stability properties are dependent on the orthogonal subspace of the CBF gradient and on the Jacobian of the function $f_i$ as defined in \eqref{eq:Jfi} at $x_e$. The dependency of the stability on $\{\nabla h_i(x_e)\}^{\perp}$ can be seen by noticing that $z_1$ is a left eigenvector of the closed-loop Jacobian $J_{cl}(x_e)$ at \eqref{eq:jacobian_boundary}, with corresponding eigenvalue $-\alpha^{\prime}(0) < 0$, which is strictly negative since $\alpha$ is a class $\mathcal{K}_{\infty}$ function. Therefore, $\nabla h_i(x_e)$ is tangent to a stable manifold of $x_e$, and therefore its stability must be determined by the remaining eigenvalues associated to 
$\{ \nabla h_i(x_e) \}^\perp$.
Furthermore, a corollary of \tref{thm:curvature} was previously presented at \cite{Reis_LCSS} for a simple integrator in $\mathbb{R}^n$: in this case, $f(x) = 0$ and $G(x) = I_n$, and 
$J_{f_i}(x,\lambda) = \lambda H_{h_i}(x) - p H_{\overline{V}}(x)$.
Therefore, condition \eqref{eq:curvature_condition} reduces to a strictly positive difference between the {\it weighted CBF and CLF curvatures} at $x_e$.}
\section{CLF Compatibility}
\label{sec:clf_compatibility}

In light of \tref{thm:curvature}, the stability properties of boundary equilibrium points with $L_g h(x) \neq 0$ are determined 
by the Jacobian $J_{f_i}$ in \eqref{eq:Jfi}, which depends on the system dynamics and on the geometry of the CLF and CBFs.
Since the system dynamics must be kept general and the CBFs must provide a model for the safety requirements of the problem, 
we consider the following question: is it possible to find a valid CLF such that all boundary equilibrium points that are 
unremovable by the technique of \cite{tan2024undesired} are either removed or unstable?
Our objective in this section is to prove that this is indeed possible by providing a method for computing such CLF, considering
a particular type of system and CLF-CBF geometry.

\begin{defn}[CLF Compatibility]
\label{def:compatibility}
Under Assumptions \ref{assumption:initial_state}-\ref{assumption:clf_condition}, a CLF with global minimum in $x_0 \in \mathbb{R}^n$ 
is said to be {\it $i$-th compatible} if $x_0$ is the only stable equilibrium point of the closed-loop system \eqref{eq:feedback_system} 
with only the $i$-th CBF implemented in the 
QP controller \eqref{eq:QP_control}.
\end{defn}
%
{
By \dref{def:compatibility}, if the CLF is compatible with the $i$-th CBF, even when undesirable boundary equilibrium points are present (since they are all unstable), it implies that the trajectories fail to converge to the CLF global minimum $x_0$ only in a set of measure zero containing any existing boundary equilibrium points, provided that no other types of undesired attractors exist, such as limit cycles.
}
\subsection{Quadratic CLF Compatibility}
\label{sec:prob_formulation}
Consider the following classes of systems:
\begin{enumerate}
    \item[(i)] Nonlinear systems of the form $\dot{x} = g(x) u$ with full-rank $g(x)$ for all $x \in \mathbb{R}^n$.
    \item[(ii)] Linear Time-Invariant systems $\dot{z} = A z + B u$, $A \in \mathbb{R}^{n \times n}$, $B \in \mathbb{R}^{n \times m}$ (LTI), where $z = x - x_0$.
\end{enumerate}
Let the transformed CLF $\overline{V}$ (see \dref{eq:transCLF}) and the $N$ CBFs be parametrized by quadratic polynomials on $\mathbb{R}^n$:
\begin{align}
\overline{V}(x) &= \frac{1}{2} \Delta x^\mathsf{T} H_{\overline{V}} \Delta x \ge 0 \,, &\Delta x = x - x_0
\label{eq:parametric_CLF} \\
h_i(x) &= \frac{1}{2} \left( \Delta x_i^\mathsf{T} H_{h_i} \Delta x_i - 1 \right) \,, &\Delta x_i = x - c_i
\label{eq:parametric_CBF} 
\end{align}
where the parameters of the CLF and $i$-th CBF are: the constant, positive-definite Hessian matrices $H_{\overline{V}}, H_{h_i} \in \mathcal{S}^n_+$ 
determining the elliptical shapes of their level sets and their centers $x_0, c_i \in \mathbb{R}^n$, $i \in \{0, 1, \cdots, N\}$.
%
Recall that due to property (ii) of \pref{prop:properties_transformed_CLF}, $V$ can be computed from $\overline{V}$ as defined in \eqref{eq:parametric_CLF}, and then used in the QP controller 
\eqref{eq:QP_control}.

Using the gradients of \eqref{eq:parametric_CLF}-\eqref{eq:parametric_CBF}, consider the expressions for $f_i(x,\lambda)$ for each class of systems after performing a state translation
$\nu = \Delta x_i$ (that is, $x = \nu + c_i$):\\
{\bf Case (i)} For nonlinear systems of the form $\dot{x} = g(x) u$ with full-rank $g(x)$, $f_i(\nu,\lambda) = G(\nu) ( P(\lambda) \nu - w )$, where $G(x) > 0$ and
$P(\lambda) = \lambda M - N$ is a symmetric Linear Matrix Pencil (LMP) \cite{Ikramov1993} with $M = H_{h_i}$, $N = p H_{\overline{V}}$, and $w = N (c_i - x_0)$.\\
{\bf Case (ii)} For LTI systems, $f_i(\nu,\lambda) = P(\lambda) \nu - w$, where $P(\lambda) = \lambda M - N$ is an LMP 
with $M = B B^\mathsf{T} H_{h_i}$, $N = p B B^\mathsf{T} H_{\overline{V}} - A$, and $w = N (c_i - x_0)$.\\

In both cases, the equilibrium points in $\partial \mathcal{C}_i$ must satisfy $f_i(\nu, \lambda) = 0$, that is
\begin{align}
P(\lambda) \nu &= w \,, \quad \lambda \ge 0
\label{eq:quadratic_equilibrium}          
\\
\nu^\mathsf{T} H_{h_i} \nu &= 1
\label{eq:quadratic_boundary} 
\end{align}
%
%
%
%
{where $P: \mathbb{R} \rightarrow \mathbb{R}^{n \times n}$ is an LMP.
Equation \eqref{eq:quadratic_equilibrium} is equivalent to $f_i(\nu, \lambda) = 0$, while \eqref{eq:quadratic_boundary} describes the $i$-th CBF boundary in terms of the translated state $\nu$.
If $P$ is regular, $P(\lambda)^{-1}$ exists $\forall \lambda \not\in \sigma_P$, and \eqref{eq:quadratic_equilibrium} can be solved for $\nu$, yielding
\begin{align}
\nu(\lambda) = P(\lambda)^{-1} w \,, \quad \lambda \in \mathbb{R}_+ \setminus \sigma_P
\label{eq:nu_sol}
\end{align}
While \eqref{eq:quadratic_equilibrium} describes the equilibrium manifold where boundary equilibrium points occur {\it implicitly}, \eqref{eq:nu_sol} describes the same manifold {\it explicitly}, with $\nu$ as an explicit function of $\lambda$.
Since $\lambda \in \mathbb{R}_+$, $v(\lambda)$ in \eqref{eq:nu_sol} is not defined only at the real, generalized eigenvalues of $P$.
\begin{defn}[Q-Function]
Define the scalar function $q(\lambda) = \nu(\lambda)^\mathsf{T} H_{h_i} \nu(\lambda)$, where $\nu(\lambda)$ is given by \eqref{eq:nu_sol}:
%
\begin{align}
q(\lambda) &= 
\underbrace{ w^\mathsf{T} P(\lambda)^{-\mathsf{T}} }_{\nu(\lambda)^\mathsf{T}} H_{h_i} \underbrace{ P(\lambda)^{-1} w }_{\nu(\lambda)} = \frac{n(\lambda)}{|P(\lambda)|^2}
\label{eq:qfunction}
\end{align}
where $n(\lambda) = w^\mathsf{T} \Adj{P(\lambda)}^{\mathsf{T}} H_{h_i} \Adj{P(\lambda)} w$ and $|P(\lambda)|^2$ are non-negative polynomials with real coefficients.
\end{defn}
}
Equation \eqref{eq:qfunction} defines the Q-function for the $i$-th barrier $h_i$. 
Since $H_{h_i}$ is positive-definite and since $P(\lambda)^{-1}$ must be full-rank for all $\lambda \in \mathbb{R} \setminus \sigma_P$, we conclude that $q$ is a positive definite function.
As will be demonstrated, for the classes of systems (i) and (ii) described in the beginning of this section, it encodes all the necessary information to compute the equilibrium points on the $i$-th boundary.
Due to \eqref{eq:quadratic_boundary}, every $\lambda_e \ge 0\,, \lambda_e \notin \sigma(P)$ satisfying $q(\lambda_e) = 1$ corresponds to a boundary equilibrium point $x_e = \nu(\lambda_e) + c_i = P(\lambda_e)^{-1} w + c_i \in \partial \mathcal{C}_i$.
\begin{thm}[Q-Function Properties]
\label{thm:q_properties}
Consider the safety-critical control problem described in \sref{sec:prob_formulation}, under Assumptions \ref{assumption:initial_state}-\ref{assumption:clf_condition}, 
and the Q-function $q(\lambda)$ associated to the $i$-th CBF. Then,
\begin{enumerate}[label=(\roman*)]
\item If \aref{assumption:initial_state} holds, then $q(0) \ge 1$.
\item For any $\lambda_i \in \sigma_P$ that is not also a root of the numerator polynomial $n(\lambda)$, 
$\lim_{\lambda \rightarrow \lambda_i} q(\lambda) = \infty$.
\item If $q(\lambda)$ is proper, the closed-loop system \eqref{eq:feedback_system} has at least one boundary equilibrium point. 
\item Let $R(\lambda) \in \mathbb{R}^{n \times(n-1)}$ be a polynomial matrix on the orthogonal space of $H_{h_i} \Adj{P(\lambda)} w \in \mathbb{R}^{n}$, 
that is, satisfying $R(\lambda)^\mathsf{T} H_{h_i} \Adj{P(\lambda)} w = 0 \,\, \forall \lambda \in \mathbb{R}$.
Define the {\it stability} polynomial matrix as
\begin{align}
S(\lambda) = R(\lambda)^\mathsf{T} \left( P(\lambda) + P(\lambda)^\mathsf{T} \right) R(\lambda) \,.
\label{eq:stability_polynomial}
\end{align}
Then, an equilibrium point $x_e = \nu(\lambda_e) + x_i \in \partial \mathcal{C}_i$ is stable if and only if $S(\lambda_e) \le 0$. Otherwise, it is unstable.
\item The maximum number of negative semi-definite intervals of $S(\lambda)$ is $n$.
\end{enumerate}
\end{thm}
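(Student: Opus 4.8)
The plan is to convert the matrix inequality $S(\lambda)\preceq 0$ into a scalar one and then bound how often a single eigenvalue branch of a matrix pencil can cross a line; I spell this out for case (i) (driftless full-rank systems), where $P(\lambda)+P(\lambda)^{\mathsf{T}}=2P(\lambda)=2(\lambda H_{h_i}-pH_{\overline{V}})$ with $H_{h_i},H_{\overline{V}}\succ 0$, and I note that case (ii) follows the same scheme after symmetrising $P+P^{\mathsf{T}}$, the extra work there being to check that its symmetric leading coefficient is definite on the relevant moving subspace so the confinement step survives.

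\textbf{Reduction to a scalar curve.} I would first set $A(\lambda):=R(\lambda)^{\mathsf{T}}H_{h_i}R(\lambda)$ (positive definite wherever $R(\lambda)$ has full column rank) and $B(\lambda):=R(\lambda)^{\mathsf{T}}H_{\overline{V}}R(\lambda)\succeq 0$, so $S(\lambda)=2\bigl(\lambda A(\lambda)-pB(\lambda)\bigr)$ and $S(\lambda)\preceq 0$ if and only if $\lambda\le p\,\tau_1(\lambda)$, where $\tau_1(\lambda)$ is the smallest eigenvalue of $A(\lambda)^{-1}B(\lambda)$; by Courant--Fischer this equals the minimum of the generalised Rayleigh quotient $y^{\mathsf{T}}H_{\overline{V}}y/y^{\mathsf{T}}H_{h_i}y$ over nonzero $y$ in the (codimension-one) column space of $R(\lambda)$, hence by interlacing $\mu_1\le\tau_1(\lambda)\le\mu_2$ for all $\lambda$, where $\mu_1\le\cdots\le\mu_n$ are the generalised eigenvalues of $(H_{\overline{V}},H_{h_i})$. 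Thus the curve $\lambda\mapsto p\tau_1(\lambda)$ stays inside the strip $[p\mu_1,p\mu_2]$.

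\textbf{Counting the components.} The continuous function $\phi(\lambda):=\lambda-p\tau_1(\lambda)$ is negative for $\lambda\le p\mu_1$, positive for $\lambda\ge p\mu_2$, and $\phi(\lambda)\to\pm\infty$ as $\lambda\to\pm\infty$; so $\{\lambda:S(\lambda)\preceq 0\}=\{\phi\le 0\}$ is one unbounded left interval together with finitely many bounded ones, all contained in $(-\infty,p\mu_2]$, and its number of components is $m+1$, where $2m+1$ is the (necessarily odd) number of sign changes of $\phi$, each of which lies in the window $(p\mu_1,p\mu_2)$. It then suffices to bound by $2n-1$ the number of solutions of $\tau_1(\lambda)=\lambda/p$ in that window. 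For this I would use that, when $R$ spans the orthogonal complement of a vector $v$, $\det\bigl(R^{\mathsf{T}}QR\bigr)$ has the same sign as $v^{\mathsf{T}}\Adj{Q}\,v$, together with $H_{\overline{V}}-(\lambda/p)H_{h_i}=-\tfrac1p P(\lambda)$, to obtain $\sign\psi(\lambda)=(-1)^{n-1}\sign\prod_{j=1}^{n-1}\bigl(\tau_j(\lambda)-\lambda/p\bigr)$, where $\psi(\lambda):=v(\lambda)^{\mathsf{T}}\Adj{P(\lambda)}\,v(\lambda)$, $v(\lambda):=H_{h_i}\Adj{P(\lambda)}w$, and $\tau_1\le\cdots\le\tau_{n-1}$ are the eigenvalues of $A(\lambda)^{-1}B(\lambda)$; consequently, for $\lambda$ with $\lambda/p\in(\mu_1,\mu_2)$, $\tau_1(\lambda)=\lambda/p$ holds exactly when $\psi(\lambda)=0$, so the crossings to be counted are the roots of $\psi$ in $(p\mu_1,p\mu_2)$.

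\textbf{The degree argument -- the main obstacle.} The polynomial $\psi$ has degree exactly $3(n-1)$ (leading coefficient $|H_{h_i}|^3\,w^{\mathsf{T}}H_{h_i}^{-1}w\ne 0$), and this bound by itself only gives roughly $\tfrac{3n}{2}$ components, which is \emph{too weak} for $n\ge 3$; the crux is to show that at least $n-2$ roots of $\psi$ are forced \emph{outside} the window. A root of $\psi$ forces some $\tau_j(\lambda)=\lambda/p$ with $\tau_j(\lambda)\in[\mu_j,\mu_{j+1}]$, so all $3(n-1)$ real roots lie in $\bigcup_{k=1}^{n-1}(p\mu_k,p\mu_{k+1})$; moreover, for each $k$ the continuous branch $\lambda\mapsto p\tau_k(\lambda)$ is $\ge\lambda$ at the left endpoint $\lambda=p\mu_k$ and $\le\lambda$ at the right endpoint $\lambda=p\mu_{k+1}$, hence crosses the diagonal an odd (so at least one) number of times inside $(p\mu_k,p\mu_{k+1})$, each such crossing being a root of $\psi$ there. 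Therefore the sub-intervals $k=2,\dots,n-1$ absorb at least $n-2$ roots, leaving at most $3(n-1)-(n-2)=2n-1$ roots in $(p\mu_1,p\mu_2)$, so $2m+1\le 2n-1$ and $S(\lambda)$ has at most $n$ negative semi-definite intervals. I would close by exhibiting problem data for which $\tau_1$ crosses the diagonal $2n-1$ times and each $\tau_k$ ($k\ge 2$) exactly once, showing $n$ is attained; the degenerate configurations ($n=1$, $w=0$, coincident $\mu_k$, tangential crossings) either lower the count or are removed by a genericity/perturbation argument, and case (ii) is handled as indicated at the start.
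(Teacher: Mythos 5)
Your proposal addresses only part \textbf{(iv)} of the theorem. Parts \textbf{(i)}, \textbf{(ii)} and, crucially, \textbf{(iii)} are never proved: you take the criterion ``$S(\lambda)\preceq 0$'' as the given object of study, but the substantive content of the theorem is precisely that stability of a boundary equilibrium $x_e=\nu(\lambda_e)+c_i$ is \emph{equivalent} to $S(\lambda_e)\preceq 0$. Establishing that requires evaluating the Jacobian $J_{f_i}$ of \eqref{eq:Jfi} on the equilibrium manifold (where it collapses to $G(\nu_e)P(\lambda_e)$ for driftless full-rank systems and to $P(\lambda_e)$ for LTI systems), identifying $\nabla h_i(\nu_e)$ with the rational vector $H_{h_i}P(\lambda)^{-1}w$, and then pushing the curvature condition of \tref{thm:curvature} through the projection onto $\{\nabla h_i(x_e)\}^{\perp}$, which is exactly what produces $R(\lambda)$ and $S(\lambda)$ in the paper's proof. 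Likewise, \textbf{(i)} (evaluate $P(0)\nu=w$ to get $\nu(0)=x_0-c_i$ and read off $h_i(x_0)\ge 0$) and \textbf{(ii)} (properness plus $q(0)\ge 1$ and $q\ge 0$ force a crossing $q(\lambda_e)=1$ on $\mathbb{R}_{\ge 0}$) are short but they are part of the statement and are simply absent. As written, the proposal cannot stand as a proof of the theorem.

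Within part \textbf{(iv)} itself your route is genuinely different from the paper's: the paper bounds the number of negative semi-definite intervals by a degree count on $|S(\lambda)|$ (degree $\le 3(n-1)$, positive semi-definite leading coefficient, one interval charged to $(-\infty,\sigma_-]$), whereas you reduce to the scalar branch $\tau_1(\lambda)$, use Cauchy interlacing to confine it to $[\mu_1,\mu_2]$, and then discount $n-2$ roots of the auxiliary polynomial $\psi(\lambda)=v(\lambda)^\mathsf{T}\Adj{P(\lambda)}v(\lambda)$ into the higher spectral gaps. That refinement is attractive and, for the symmetric case, arguably tightens the paper's somewhat informal eigenvalue-curve argument. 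However, it only works where $P(\lambda)=\lambda H_{h_i}-pH_{\overline{V}}$ is symmetric with $H_{h_i}\succ 0$; for the LTI case $M=BB^\mathsf{T}H_{h_i}$ is not symmetric, the symmetrized leading coefficient $M+M^\mathsf{T}$ of $P+P^\mathsf{T}$ need not be positive semi-definite, and the interlacing/confinement step you rely on has no analogue, so your ``same scheme after symmetrising'' remark is an unproven claim, not extra bookkeeping. You also lean on genericity (transversal crossings, $R(\lambda)$ of full column rank, $v(\lambda)\neq 0$ at the roots of $\psi$) that would need to be argued, not just invoked, before the count $m+1\le n$ is rigorous.
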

\vspace{-4mm}
\begin{pf}
For $P(\lambda)$ defined in both Cases (i) and (ii), evaluating \eqref{eq:quadratic_equilibrium} at $\lambda = 0$ yields 
$P(0) \nu = w$, yielding $\nu(0) = x_0 - c_i$. Therefore, $q(0) = (x_0 - c_i)^\mathsf{T} H_{h_i} (x_0 - c_i)$. Then, by \eqref{eq:parametric_CBF},
$q(0) > 1$ is equivalent to $h_i(x_0) > 0$, which means that $x_0 \in \mathcal{C}_i$, that is, \aref{assumption:initial_state} is satisfied. 
This proves (i).

Since $q$ is a positive-definite rational function with denominator given by the polynomial 
$|P(\lambda)|^2$, every $\lambda_i \ge 0$ such that $\sigma_P$ satisfies $|P(\lambda)| = 0$, ($\lambda_i$ is a pole of $q$).
Therefore, if $\lambda_i \in \sigma_P$ is not also a zero of $q$ (that is, a root of the numerator polynomial $n(\lambda)$), then $\lim_{\lambda \rightarrow \lambda_i} \frac{n(\lambda)}{|P(\lambda)|^2} = \infty$.
This proves (ii).

Consider an arbitrary closed interval $\mathcal{I} \subset \mathbb{R}_+$. If $q(\lambda) > 1$ for all $\lambda \in \mathcal{I}$, 
then $\mathcal{I}$ does not contain equilibrium point solutions. Using \eqref{eq:qfunction}, 
$q(\lambda) > 1$  over $\mathcal{I}$ implies $n(\lambda) - |P(\lambda)|^2 > 0$ over $\mathcal{I}$.
%
%
If it were possible to guarantee this condition for the entire positive real line $\mathcal{I} = \mathbb{R}_+$, then no boundary equilibrium points would exist. However, this is impossible in general, since $q(\lambda) \ge 0$ and $\lim_{\lambda \rightarrow +\infty} q(\lambda) = 0$ in case $q(\lambda)$ is proper, proving (iii).
%
For the considered problem, this also demonstrates the impossibility of removing all undesirable equilibrium points for certain types of systems.

Next, consider the translated state $\nu = x - c_i$ and the CL and LTI systems in \sref{sec:prob_formulation}. 
Then, the boundary equilibrium point is $\nu_e = \nu(\lambda_e) \in \partial \mathcal{C}_i$ for some $\lambda_e \ge 0$ such that $q(\lambda_e) = 1$, and $\nu$ as defined in \eqref{eq:nu_sol}.
\\
{\bf Case (i)} For CL systems, the columns of $J_{f_i}(\nu,\lambda)$ are given by
$\partial_k f_i(\nu,\lambda_e) = \partial_k G(\nu) ( P(\lambda) \nu - w ) + G(\nu) [P(\lambda)]_k$.
Using \eqref{eq:quadratic_equilibrium}, $J_{f_i}(\nu_e,\lambda_e) = G(\nu_e) P(\lambda_e)$.
\\
{\bf Case (ii)} For LTI systems, the columns of $J_{f_i}(\nu,\lambda)$ are given by
$\partial_k f_i(\nu,\lambda) = [P(\lambda)]_k$.
Using \eqref{eq:quadratic_equilibrium}, $J_{f_i}(\nu_e,\lambda) = P(\lambda_e)$.\\
Since $\nabla h_i(\nu) = H_{h_i} \nu$, using \eqref{eq:nu_sol}, the vector rational function
$\nabla h_i(\lambda) = H_{h_i} P(\lambda)^{-1} w$ describes the barrier gradient 
in the equilibrium manifold consisting of states $\nu$ such that $f_i(\nu, \lambda) = 0$.
At the equilibrium point $\nu_e$, we have $\nabla h_i(\nu_e) = \nabla h_i(\lambda_e)$.
From \tref{thm:curvature}, for any of the two cases, if there exists $v \in \{\nabla h_i(\lambda_e)\}^\perp$ such that
$v^\mathsf{T} P(\lambda_e) v > 0$, then $\nu_e$ is unstable.
Let $v = \projection{\nabla h_i}(\lambda_e) z$ with an arbitrary $z \in \mathbb{R}^n$, where 
$\projection{\nabla h_i}(\lambda) = \norm{\nabla h_i(\lambda)}{}^2 I_n - \nabla h_i(\lambda) \nabla h_i(\lambda)^\mathsf{T}$ is a scaled projection matrix. Then, $v$ is a projection into $\{ \nabla h_i(\nu_e) \}^{\perp}$ for any $z \in \mathbb{R}^n$. 
Substituting $v$ into \eqref{eq:curvature_condition}, 
$\nu_e$ is unstable if $\exists z \in \mathbb{R}^n$ such that
\begin{align}
z^\mathsf{T} S_{null}(\lambda_e) z > 0 \,,
\label{eq:lift_curvature_condition}
\end{align}
where $S_{null}(\lambda) = \projection{\nabla h_i}(\lambda)^\mathsf{T} ( P(\lambda) + P(\lambda)^\mathsf{T} ) \projection{\nabla h_i}(\lambda)$.
By construction, $\nabla h_i(\lambda)$ is in the null-space of $S_{null}(\lambda)$ for all $\lambda \in \mathbb{R}$.

The null-space of $\nabla h_i(\lambda)$ is the same as that of the vector polynomial $H_{h_i} \Adj{P(\lambda)} w$, 
since they differ only by a scalar factor of $|P(\lambda)|^{-1}$.
Since $|P(\lambda)|$ is of maximum degree $n$ and $P(\lambda) \Adj{P(\lambda)} = |P(\lambda)| I_n$, $\Adj{P(\lambda)}$ is a polynomial matrix of maximum degree $n-1$. 
Let $H_{h_i} \Adj{P(\lambda)} w = v_0 + \lambda v_1 + \cdots + \lambda^{l} v_{l}$, $l \le n-1$, with $v_i$ being constant vector coefficients. 
Let $r(\lambda) = r_0 + \lambda r_1 + \cdots + \lambda^{d} r_{d} \in \mathbb{R}^n$ be a vector polynomial of degree $d \in \mathbb{N}$ 
in the null-space of $\nabla h_i(\lambda)$. Since $r(\lambda)^\mathsf{T} ( v_0 + \lambda v_1 + \cdots + \lambda^{l} v_{l} ) = 0 \,\, \forall \lambda \in \mathbb{R}$,
their coefficients must satisfy $\sum^k_{i=0} v_i^\mathsf{T} r_{i-k} = 0$ for $k = \{ 0, l+d \}$.
These $l+d+1$ equations can be stacked in matrix form $V \overline{r} = 0$, with $\overline{r}^\mathsf{T} = [\, r_0^\mathsf{T} \,\, r_1^\mathsf{T} \,\, \cdots \,\, r_d^\mathsf{T} \,] \in \mathbb{R}^{(d+1)n}$ 
, $V \in \mathbb{R}^{(l+d+1)\times(d+1)n}$ \cite{Gantmacher1980}[Chapter XII, Section 3].
%
%
%
%
Since $\mathbb{R}^n = \Span{ H_{h_i} \Adj{P(\lambda)} w } \bigoplus \{ H_{h_i} \Adj{P(\lambda)} w \}^{\perp} \,, \forall \lambda \in \mathbb{R}$ 
(a direct sum of orthogonal subspaces), one can always obtain $n-1$ linearly independent basis vectors for $\{ H_{h_i} \Adj{P(\lambda)} w \}^{\perp}$ from the 
vector coefficients drawn from a basis of $\mathcal{N}(V)$.

Then, the arbitrary vector $z$ from \eqref{eq:lift_curvature_condition} can be decomposed using the basis 
\begin{align}
\{ \nabla h_i(\lambda_e), r_1(\lambda_e), \cdots, r_{n-1}(\lambda_e) \}
\label{eq:orthogonal_basis}
\end{align}
where $r_i(\lambda)$, $i \in \{ 1, \cdots, n-1\}$ are basis polynomials for $\{ H_{h_i} \Adj{P(\lambda)} w \}^{\perp}$.
Then, $z = \beta \nabla h_i(\lambda_e) + R(\lambda_e) \alpha$, where 
$R(\lambda) = [\, r_1(\lambda) \,\, \cdots \,\, r_{n-1}(\lambda) \,] \in \mathbb{R}^{n \times(n-1)}$ is a matrix polynomial of degree $d$
whose columns are in $\{ H_{h_i} \Adj{P(\lambda)} w \}^{\perp}$, and $\beta \in \mathbb{R}$, $\alpha \in \mathbb{R}^{n-1}$ are the coordinates of $z$ 
in the basis \eqref{eq:orthogonal_basis}.
Then, we have $\projection{\nabla h_i}(\lambda) z = \norm{\nabla h(\lambda)}{}^2 R(\lambda) \alpha$, and the left side of \eqref{eq:lift_curvature_condition} becomes
\begin{align}
z^\mathsf{T} S_{null}(\lambda_e) z \!=\! \norm{\nabla h(\lambda_e)}{}^4 \alpha^\mathsf{T} S(\lambda_e) \alpha
\label{eq:lift_equilibrium_condition2}
\end{align}
with $S(\lambda)$ as defined in \eqref{eq:stability_polynomial}. 
Since $\alpha$ is arbitrary, from \tref{thm:curvature}, we conclude that $\nu_e$ is stable if and only if $S(\lambda)$ is negative semi-definite at $\lambda_e$. 
Otherwise, $\exists \alpha \in \mathbb{R}^{n-1}$ such that \eqref{eq:lift_equilibrium_condition2} is strictly positive, and $\nu_e$ is unstable. 
This proves {(iv)}.

The polynomial matrix $S(\lambda) \in \mathbb{R}^{n-1 \times n-1}$ has maximum degree 3 (odd) and its leading coefficient is positive semi-definite. 
That means that there exist threshold values $\sigma_+$, $\sigma_-$ such that
$S(\lambda) \ge 0$ for all $\lambda \ge \sigma_+$ and 
$S(\lambda) \le 0$ for all $\lambda \le \sigma_-$, respectively.
Furthermore, its determinant $|S(\lambda)|$ has a maximum of $3(n-1)$ real roots, which are exactly the values of $\lambda$ where the eigenvalue curves of $S(\lambda)$ change sign. { By the continuity of $S(\lambda)$}, that means that all $n-1$ eigenvalue curves of $S(\lambda)$ must go from negative to positive as $\lambda$ increases. Then, excluding these $n-1$ roots, a total of $2(n-1)$ roots remain, which can result in a maximum of $n-1$ negative semi-definite intervals for $S(\lambda)$ (in case all roots of $|S(\lambda)|$ are real), plus the first negative semi-definite interval $\mathcal{I}_1 = (-\infty, \sigma_-]$. 
Therefore, in the worst case, $n$ negative semi-definite intervals for $S(\lambda)$ exist. This proves {(v)}. \myqed 


%
\end{pf}
\vspace{-4mm}
{ These properties provide useful geometric information about the Q-function, allowing us to draw conclusions on the number of boundary equilibrium solutions and on their stability properties. 
The next result uses the Q-function properties to provide a necessary and sufficient condition for quadratic CLF compatibility, considering the class of LTI and driftless full-rank systems.}
\begin{cor}
\label{cor:quadratric_compatibility}
Under Assumptions \ref{assumption:initial_state}-\ref{assumption:clf_condition}, { for systems of the classes (i) or (ii), a quadratic CLF \eqref{eq:parametric_CLF} is compatible with a quadratic CBF \eqref{eq:parametric_CBF}} if and only if $S(\lambda_e)$ is not negative semi-definite at the positive real roots of the polynomial $z(\lambda) = n(\lambda) - |P(\lambda)|^2$.
\end{cor}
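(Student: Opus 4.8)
The plan is to read the statement off the \emph{definition} of $i$-th compatibility together with \tref{theorem:existence_equilibria} and parts (i)--(iii) of \tref{thm:q_properties}; the only genuinely new work is the enumeration that matches \emph{every} closed-loop equilibrium, other than $x_0$, with a root of $z(\lambda)$. By definition the quadratic CLF \eqref{eq:parametric_CLF} is $i$-th compatible iff $x_0$ is the unique stable equilibrium of the closed loop \eqref{eq:feedback_system} with the controller \eqref{eq:QP_control} carrying only the $i$-th CBF. So I would first pin down that equilibrium set. \tref{theorem:existence_equilibria} gives $\mathcal{E}=\mathcal{E}_{\partial\mathcal{C}_i}\cup\mathcal{E}_{int(\mathcal{C})}$; since \aref{assumption:clf_condition} is standing (recall we may always pass to the transformed dynamics \eqref{eq:transformed_affine} so that it holds), the reduction \eqref{eq:transf_equilibrium_set} applies, leaving only $\{x_0\}$ from the interior and $\overline{\mathcal{E}}_{\partial\mathcal{C}_i}=\mathcal{E}_{\partial\mathcal{C}_i}\cap\{L_g h_i\neq0\}$ from the boundary. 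For the two system classes of \sref{sec:prob_formulation} I would also check directly that $\mathcal{E}_{int(\mathcal{C})}=\{x_0\}$: for drift-less full-rank systems the interior condition $0=pG(x)\nabla\overline V(x)$ with $G(x)\succ0$ forces $\nabla\overline V(x)=0$, hence $x=x_0$ by strict convexity of $\overline V$; for LTI systems an interior equilibrium $z^\star\neq0$ would satisfy $A z^\star = pBB^\mathsf{T}H_{\overline V}z^\star$, whence $L_{f}\overline V(z^\star)=p\norm{B^\mathsf{T}H_{\overline V}z^\star}{}^2\ge0$, contradicting $L_f\overline V<0$ for $z^\star\neq0$ guaranteed by \pref{prop:trans_clf_condition}. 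Finally $x_0$ is stable, being the CLF minimum stabilized by \eqref{eq:QP_control}.

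Next I would parametrize $\overline{\mathcal{E}}_{\partial\mathcal{C}_i}$. With the quadratic data \eqref{eq:parametric_CLF}--\eqref{eq:parametric_CBF} and the translation $\nu=x-c_i$, a boundary equilibrium solves \eqref{eq:quadratic_equilibrium}--\eqref{eq:quadratic_boundary}; off the pencil spectrum $\sigma_P$ this is equivalent, by \eqref{eq:nu_sol} and \eqref{eq:qfunction}, to $q(\lambda_e)=1$ with $x_e=P(\lambda_e)^{-1}w+c_i$, and --- since $|P(\lambda_e)|\neq0$ there --- $q(\lambda_e)=1$ is in turn equivalent to $z(\lambda_e)=n(\lambda_e)-|P(\lambda_e)|^2=0$. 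For quadratic data all points of $\overline{\mathcal{E}}_{\partial\mathcal{C}_i}$ are non-degenerate, i.e. have $\lambda_e\notin\sigma_P$, by the remark preceding \tref{thm:q_properties}. Hence there is a bijection between the non-$x_0$ equilibria and the roots $\lambda_e\in[0,\infty)\setminus\sigma_P$ of $z$; a root at $\lambda_e=0$, if present, returns $x_e=x_0$, consistently with $q(0)\ge1$ in \tref{thm:q_properties}(i).

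The final step is the stability criterion: by \tref{thm:q_properties}(iii), the equilibrium attached to $\lambda_e$ is stable iff $S(\lambda_e)$ in \eqref{eq:stability_polynomial} is negative semi-definite, and unstable otherwise. Chaining the three steps, $x_0$ is the only stable equilibrium --- i.e. the CLF is $i$-th compatible --- precisely when no boundary equilibrium is stable, that is, when $S(\lambda_e)$ is \emph{not} negative semi-definite at every root $\lambda_e\ge0$ of $z$. This is exactly the asserted characterization, and it yields both implications simultaneously.

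The step I expect to be the real obstacle is the equilibrium enumeration in the first paragraph: rigorously excluding every equilibrium type not captured by $z$'s roots --- interior equilibria other than $x_0$, boundary equilibria with $L_g h_i=0$ or $L_g V=0$, and degenerate boundary equilibria with $\lambda_e\in\sigma_P$. This is bookkeeping rather than computation --- $L_g V=0$ points are removed by construction through \eqref{eq:transf_equilibrium_set}; for full-rank $g$, $L_g h_i=0$ forces $\nabla h_i(x)=H_{h_i}(x-c_i)=0$, i.e. $x=c_i$, the ellipsoid center, which lies strictly inside $\overline{\mathcal{C}}_i$ and is therefore not a boundary point; the interior and non-degeneracy statements are handled as above --- but it is the part that must be stated carefully, since the corollary's condition only constrains $S$ at the roots of $z$. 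Once this enumeration is secured, \tref{thm:q_properties} closes the argument.
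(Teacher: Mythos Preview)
Your proof is correct and follows the same approach as the paper --- invoking \tref{thm:q_properties}(iii) for the stability of boundary equilibria and \aref{assumption:clf_condition} to eliminate interior equilibria other than $x_0$. The paper's own argument is a three-sentence sketch that does exactly this; your version is considerably more careful about the equilibrium enumeration (degenerate $\lambda_e\in\sigma_P$, the $L_g h_i=0$ case, the explicit verification that $\mathcal{E}_{int(\mathcal{C})}=\{x_0\}$ for both system classes) and about closing both directions of the biconditional, whereas the paper only spells out sufficiency.
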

\vspace{-4mm}
\begin{pf}
This result follows directly from property (iv) of \tref{thm:q_properties}. Notice that the positive roots of the polynomial $z(\lambda) = n(\lambda) - |P(\lambda)|^2$ correspond to the equilibrium solutions $q(\lambda) = \frac{n(\lambda)}{|P(\lambda)|^2} = 1$. If all of them occur at the regions where the stability polynomial matrix $S(\lambda)$ is not negative semi-definite, then these roots correspond to unstable boundary equilibrium points. By \aref{assumption:clf_condition}, no interior equilibrium points other then the CLF minimum exist. Then, we conclude that the CLF \eqref{eq:parametric_CLF} is compatible with the $i$-th CBF. \myqed
\end{pf}
\vspace{-4mm}
\subsection{Numerical Examples}
\begin{figure}[htbp!]
\centering
  \includegraphics[width=1.0\columnwidth]{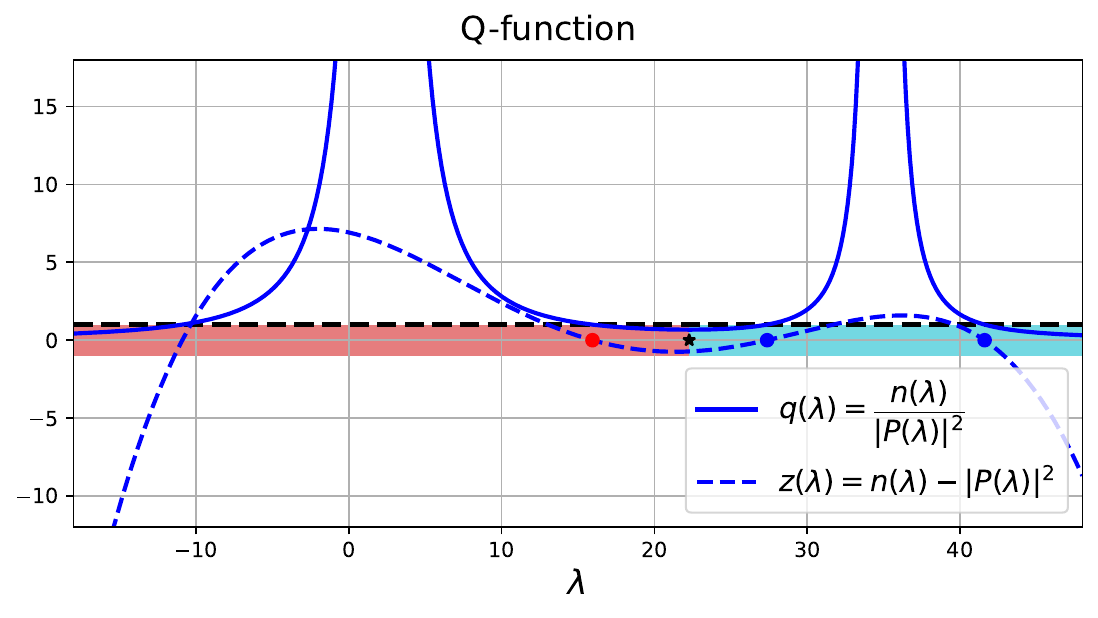}
\vspace{-8mm}
\caption{Q-function and semi-definiteness intervals of $S(\lambda)$, for a two-dimensional LTI system.}
\label{fig:qfunction}
\end{figure}
{\bf Example 1.}
Figure \ref{fig:qfunction} shows the graphs of $q(\lambda)$ and $z(\lambda) = n(\lambda) - |P(\lambda)|^2$ for the 
LTI system $\dot{x}_k = - 2 x_k + u_k$, $k=1,2$, CLF $V(x) = 0.08 x_1^2 + 0.96 x_2^2$, centered at $x_0 = (0,0)$ and CBF $h_3(x) = 0.487 \,(x_1-6)^2 - 0.152 \,(x_1-6) x_2 + 0.069 \, x_2^2 - 0.5$, centered on $c_1 = (6,0)$.
The asymptotes of $q(\lambda)$ occur at the two generalized eigenvalues of the pencil $P$. Notice that $q(0) \ge 1$, or equivalently $z(0) \ge 0$. From \tref{thm:q_properties}{(i)}, this implies \aref{assumption:initial_state}, that is, $x_0 \in \mathcal{C}$, which is indeed true from the CBF expression, since $h(x_0) \ge 0$.
In this example, $S(\lambda)$ is simply a $1\times 1$ matrix polynomial (a scalar polynomial) of degree $3$ with real coefficients. Therefore, $|S(\lambda)|$ has three roots in $\mathbb{C}$. \tref{thm:q_properties}(iv) implies that a maximum of $n = 2$ negative semi-definite intervals can occur for $S(\lambda)$: 
in this example, $S(\lambda) \le 0$ in the interval $(-\infty, 23)$ (red strip) and $S(\lambda) \ge 0$ in the interval $(23,+\infty)$ (blue strip). These intervals are separated by the real root of $|S(\lambda)|$ at $\lambda \approx 23$ (star-shaped dot); the remaining two roots of $|S(\lambda)|$ are complex-conjugates.
There is one solution $z(\lambda_e) = 0$ around $\lambda_e \approx 16$ corresponding to an stable equilibrium point (red dot) 
and other two around $\lambda_e \approx28$ and $\lambda_e \approx 42$ corresponding to unstable equilibrium points (blue dots).
{ From \cref{cor:quadratric_compatibility}, the CLF is not compatible with $h_3$.
}
\begin{figure}[htbp!]
\centering
\includegraphics[width=1.00\columnwidth]{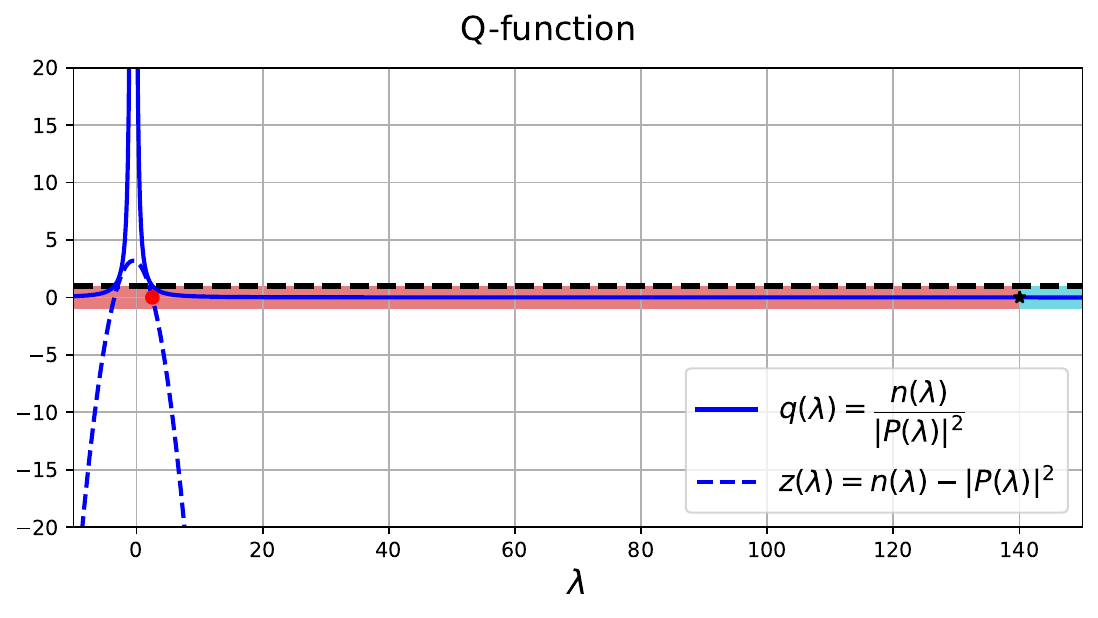}
\vspace{-8mm}
\caption{Q-function and semi-definiteness intervals of $S(\lambda)$, for an underactuated two-dimensional LTI system.}
\label{fig:qfun_underactuated}
\end{figure}

{\bf Example 2.} 
Figure \ref{fig:qfun_underactuated} shows the graphs of $q(\lambda)$ and $z(\lambda) = n(\lambda) - |P(\lambda)|^2$ for the {\it underactuated} two-dimensional LTI system $\dot{x}_1 = x_2$, $\dot{x}_2 = - x_1 - x_2 + u$, $u \in \mathbb{R}$, with the CLF $V(x) = 0.59 x_1^2 - 1.08 x_1 x_2 + 3.91 x_2^2$, centered at $x_0 = (0,0)$ and a similar CBF $h_3(x)$ than in Example 1.
This example is of particular importance since the corresponding control matrix $B$ from $\dot{x} = A x + B u$ is not full-rank.
Many of the observations from Example 1 also hold here, such as $x_0 \in \mathcal{C}$ and $S(\lambda)$ being a scalar polynomial.
However, in this case, $z(\lambda)$ always has two pairs of identical real roots, only one of these with $\lambda \ge 0$ 
$\lambda_e \approx 2.5$, shown in red in \fref{fig:qfun_underactuated}). This means that, in this case, only one boundary equilibrium solution can exist.
Furthermore, the first negative semi-definite interval of $S(\lambda)$ is $(-\infty, 140)$ (red strip).
Moreover, the CLF is not compatible with $h_3$.
%
\begin{thm}[Compatibility Barrier]
\label{thm:compatibility_barrier}
Under Assumptions \ref{assumption:initial_state}-\ref{assumption:clf_condition}, consider the Q-function $q(\lambda)$ associated to the $i$-th CBF. 
Let $\mathcal{I} = (-\infty, \sigma_-]$ be the first negative semi-definite interval of $S(\lambda)$, that is, $S(\lambda) \le 0 \,\, \forall \lambda \le \sigma_-$, and $\epsilon > 1$.
Then, the CLF \eqref{eq:parametric_CLF} is $i$-th compatible if
\begin{align}
B(q) = \min_{ \lambda \in \mathcal{I} \cap \mathbb{R}_{\ge 0} } z_\epsilon(\lambda) \ge 0
\label{eq:compatibility_barrier} \\
S^{\prime}(\lambda) \ge 0 
\label{eq:monotonicity} 
\end{align}
where $z_\epsilon(\lambda) = n(\lambda) - \epsilon |P(\lambda)|^2$ is a polynomial in $\lambda$.
\end{thm}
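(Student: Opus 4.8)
The plan is to reduce the claim to the criterion of \cref{cor:quadratric_compatibility}: the quadratic CLF is $i$-th compatible precisely when $S(\lambda_e)$ fails to be negative semi-definite at every nonnegative real root $\lambda_e$ of $z(\lambda) = n(\lambda) - |P(\lambda)|^2$ (equivalently, at every equilibrium solution $\lambda_e \ge 0$ with $q(\lambda_e) = 1$), since \aref{assumption:clf_condition} already excludes interior equilibria other than $x_0$ exactly as in the proof of that corollary. The whole task is therefore to show that, under \eqref{eq:compatibility_barrier}--\eqref{eq:monotonicity}, no such root lands in a region where $S$ is negative semi-definite.

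First I would exploit the monotonicity hypothesis \eqref{eq:monotonicity}. Reading $S'(\lambda) \ge 0$ in the sense of positive semi-definiteness, integration gives $S(\lambda_2) - S(\lambda_1) \ge 0$ whenever $\lambda_2 \ge \lambda_1$, so $\lambda_{\max}(S(\cdot))$ is nondecreasing; once $S$ acquires a positive eigenvalue it keeps one, which forces $\{\lambda : S(\lambda) \le 0\}$ to be a single half-line. By hypothesis this half-line is exactly $\mathcal{I} = (-\infty,\sigma_-]$, so $S(\lambda)$ is \emph{not} negative semi-definite for every $\lambda > \sigma_-$. Consequently \cref{cor:quadratric_compatibility} holds as soon as every nonnegative equilibrium root $\lambda_e$ of $z$ satisfies $\lambda_e > \sigma_-$; equivalently, $z$ has no equilibrium root in $\mathcal{I} \cap \mathbb{R}_{\ge 0} = [0,\sigma_-]$ (the statement being vacuous when $\sigma_- < 0$, with the empty minimum in \eqref{eq:compatibility_barrier} read as $+\infty$).

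Second I would invoke the barrier inequality \eqref{eq:compatibility_barrier}: on $[0,\sigma_-]$ it gives $n(\lambda) \ge \epsilon\,|P(\lambda)|^2$, hence
\begin{align}
z(\lambda) = n(\lambda) - |P(\lambda)|^2 \;\ge\; (\epsilon-1)\,|P(\lambda)|^2 \;\ge\; 0 \qquad \forall\,\lambda \in [0,\sigma_-],
\nonumber
\end{align}
with equality possible only where $|P(\lambda)| = 0$. Since $\epsilon > 1$, a root $z(\lambda_e)=0$ with $\lambda_e \in [0,\sigma_-]$ would force $|P(\lambda_e)|^2 \le 0$, i.e. $\lambda_e \in \sigma_P$, which is a pole of $q$ and not a non-degenerate equilibrium solution (as recorded after \eqref{eq:qfunction}). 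Therefore $[0,\sigma_-]$ contains no boundary equilibrium point, every equilibrium root $\lambda_e \ge 0$ lies strictly to the right of $\sigma_-$ where $S(\lambda_e)$ is not negative semi-definite, and \cref{cor:quadratric_compatibility} yields $i$-th compatibility.

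The main obstacle is the passage from the pointwise condition \eqref{eq:monotonicity} to the structural statement that $\{\lambda : S(\lambda) \le 0\}$ is the single interval $\mathcal{I}$ — this is exactly what rules out the additional negative semi-definite intervals allowed by \tref{thm:q_properties}{(iv)}, and it relies on \eqref{eq:monotonicity} holding on all of $\mathbb{R}$ together with monotonicity of $\lambda_{\max}$ under the Loewner order. The secondary point to dispatch is the degenerate case $|P(\lambda_e)|=0$ inside $[0,\sigma_-]$, which is harmless precisely because such $\lambda_e$ are poles of the Q-function rather than equilibrium solutions.
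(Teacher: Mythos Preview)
Your proposal is correct and follows essentially the same approach as the paper's proof: use \eqref{eq:monotonicity} to conclude that $\mathcal{I}$ is the unique negative semi-definite interval of $S(\lambda)$, and use \eqref{eq:compatibility_barrier} to exclude equilibrium roots from $\mathcal{I}\cap\mathbb{R}_{\ge 0}$, then invoke \aref{assumption:clf_condition} for interior equilibria. The only cosmetic difference is that you route the conclusion explicitly through \cref{cor:quadratric_compatibility} and handle the degenerate cases $\sigma_-<0$ and $|P(\lambda_e)|=0$ more carefully than the paper does.
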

\vspace{-5mm}
\begin{pf}
In condition \eqref{eq:compatibility_barrier}, $B(q)$ represents a barrier function for the semi-definite interval $\mathcal{I}$: that is, 
under \aref{assumption:initial_state}-\ref{assumption:disjoint_barriers}, if $B(q)$ is non-negative, then there are no roots of 
$z_\epsilon(\lambda)$ in $\mathcal{I} \cap \mathbb{R}_{\ge 0}$.
Then, $q(\lambda) = \frac{n(\lambda)}{|P(\lambda)|^2} \ge \epsilon > 1$ in $\mathcal{I} \cap \mathbb{R}_{\ge 0}$, which means 
that no boundary equilibrium solutions of \eqref{eq:quadratic_equilibrium}-\eqref{eq:quadratic_boundary} exist in $\mathcal{I}$.
%
%
Condition \eqref{eq:monotonicity} ensures that the eigenvalues of $S(\lambda)$ are {\it monotonically increasing}. 
This is a sufficient condition {(but not necessary)} to ensure that no negative semi-definite interval of $S(\lambda)$ other than $\mathcal{I}$ exists , and therefore equilibrium solutions of 
\eqref{eq:quadratic_equilibrium}-\eqref{eq:quadratic_boundary} occurring in $\mathbb{R}_{\ge 0} \setminus \mathcal{I}$ 
correspond to {\it unstable} equilibrium points.
Under \aref{assumption:clf_condition}, no interior equilibrium points other than $x_0$ exist. 
This shows that no stable equilibrium other than $x_0$ exists under conditions \eqref{eq:compatibility_barrier}-\eqref{eq:monotonicity}.
Thus, the CLF is $i$-th compatible. \myqed 
\end{pf}
\vspace{-4mm}
{ In \eqref{eq:compatibility_barrier}, $B(q)$ is a functional of the Q-function $q$ associated to the $i$-th CBF,} being dependent on the system dynamics, CLF and $i$-th CBF geometry. Furthermore, due to condition \eqref{eq:monotonicity}, \tref{thm:compatibility_barrier} is a sufficient, although not necessary condition for CLF $i$-th compatibility. 
\subsection{Compatible CLF Controller}
\label{sec:proposed_controller}

In this section, our objective is twofold: 
(i) to propose a ``compatibilization'' algorithm for computing a compatible CLF from a non-compatible one, and 
(ii) to propose a control strategy to smoothly transform the CLF used in the QP-controller \eqref{eq:QP_control} towards the compatible CLF computed from the compatibilization algorithm, in the regions of the state space where boundary equilibrium points occur.

\begin{defn}
Hessian $H$ is $i$-th compatible if its corresponding CLF $V(x) = \frac{1}{2} \Delta x^\mathsf{T} H \Delta x$
is $i$-th compatible.
\end{defn}


Let $\overline{V}_r(x) = \frac{1}{2} \Delta x^\mathsf{T} H_{\overline{V}_r} \Delta x$ be a reference quadratic CLF
centered on $x_0 \in \mathcal{C}$ (\aref{assumption:initial_state} holds)
and define the following optimization problem related to the $i$-th CBF:
\begin{align}
H_{\overline{V}_i} &= \argmin_{H \in \mathcal{S}^n_+} \,\, \norm{H - H_{\overline{V}_r} }{\mathcal{F}}^2 
\label{eq:compatibilization_opt} \\ 
&\quad \textrm{s.t.} \,\,\, B(q) \ge 0 \tag{compatibility} \\
&\qquad \,\,\, S^{\prime}(\lambda) \ge 0 \tag{monotonicity} \\
&\qquad \,\, H A + A^\mathsf{T} H \le 0 \tag{CLF condition}
\end{align}
The result of optimization \eqref{eq:compatibilization_opt} is the Hessian $H_{\overline{V}_i}$ of the closest quadratic $\overline{V}_i(x) = \frac{1}{2} \Delta x^\mathsf{T} H_{\overline{V}_i} \Delta x$ to the reference CLF $\overline{V}_r$ satisfying:\\
{\bf(i)} For LTI systems, $\overline{V}_i$ is a valid CLF since it satisfies the CLF condition: $L_f \overline{V}_i \le 0 \rightarrow H_{\overline{V}_i} A + A^\mathsf{T} H_{\overline{V}_i} \le 0$. 
For driftless systems: the CLF condition is always satisfied due to $f(x) = A x = 0$ ($A=0$). \\
%
{\bf(ii)} $\overline{V}_i$ is $i$-th compatible (due to \tref{thm:compatibility_barrier}).
Likewise, $H_{\overline{V}_i}$ is $i$-th compatible. \\
%
Thus, if the reference $H_{\overline{V}_r}$ is already $i$-th compatible, the result of \eqref{eq:compatibilization_opt} is simply $H_{\overline{V}_i} = H_{\overline{V}_r}$.
\vspace{-2mm}
\begin{rem}
In optimization \eqref{eq:compatibilization_opt}, the barrier $B(q)$ depends on polynomials such as $n(\lambda)$, $|P(\lambda)|^2$ 
and $S(\lambda)$, which can be efficiently computed using computational methods for polynomial manipulation.
These polynomials depend on the system and CLF-CBF parameters. Therefore, $B(q)$ is dependent on the optimization variable $H$, and must be recomputed at 
each solver iteration. Any solver supporting non-convex constrained optimization could be used, such as Sequential Least Squares Programming (SLSQP) \cite{Bonnans2006}.
\end{rem}

Let $\{ H_{\overline{V}_1}, \cdots, H_{\overline{V}_N} \}$ be a set of $N$ compatible Hessians computed using \eqref{eq:compatibilization_opt}, where 
$H_{\overline{V}_i}$ is the closest $i$-th compatible Hessian to the reference $H_{\overline{V}_r}$. 
%
%
Define a parametric CLF $\overline{V}(x,\pi) = \frac{1}{2} \Delta x^\mathsf{T} H_{\overline{V}}(\pi) \Delta x$ with parametrized Hessian given by
\begin{align}
H_{\overline{V}}(\pi) = L(\pi)^\mathsf{T} L(\pi) \in \mathcal{S}^n_{+}
\label{eq:parameterized_hessian}
\end{align}
where $\pi \in \mathbb{R}^{\dim{\mathcal{S}^n}}$ is a state vector defining the geometry of the level sets of $\overline{V}$.
%
We seek to design a controller for $\pi$, so that the level sets of $\overline{V}$ are dynamically changed.
Using an integrator as the CLF shape state dynamics $\dot{\pi} = u_\pi$, define a Lyapunov function candidate as
\begin{align}
    V_\pi(\pi, H_r) = \frac{1}{2} \norm{ H_{\overline{V}}(\pi) - H_r }{\mathcal{F}}^2
    \label{eq:clf_shape}
\end{align}
%
where $H_r \in \mathcal{S}^n_+$ is a constant Hessian.
If a stabilization controller for $\dot{\pi} = u_\pi$ is designed in such a way that $\dot{V}_\pi = \nabla V_{\pi}^\mathsf{T} u_v \le 0$, then $H_{\overline{V}}(\pi)$ approaches $H_r$. This way, the level sets of $\overline{V}$ are smoothly adapted to match the level sets of a CLF with Hessian $H_r$ 
and center $x_0$.

Now, consider the QP controller \eqref{eq:QP_control} with the CLF $V$ computed from the inverse CLF transformation 
of $\overline{V}(x,\pi)$, as defined in \dref{eq:transCLF}. This transformation is always guaranteed to exist by \pref{prop:properties_transformed_CLF}(ii).
By \tref{theorem:existence_equilibria}, under \aref{assumption:disjoint_barriers}, all of the $i$-th boundary equilibrium points are contained in the set $\mathcal{S}_i$ (defined in \eqref{eq:Si_set}) where the CLF and only the $i$-th CBF constraint are active in the QP \eqref{eq:QP_control}.
If the trajectory of the closed-loop system \eqref{eq:feedback_system} is in the region of attraction of an asymptotically stable 
$i$-th boundary equilibrium point, then the state eventually enters $\Omega^{clf}_i$.
Then, the following strategy is considered: \\
{\bf(i)} if the state is inside $\Omega^{clf}_{i}$, $\overline{V}(x, \pi)$ must converge to $\overline{V}_i$, the closest $i$-th compatible CLF to $\overline{V}_r$, since this will induce a bifurcation on the closed-loop system state space, either removing or rendering the $i$-th boundary equilibrium points unstable.\\
{\bf(ii)} if the state is outside $\cup^N_i \Omega^{clf}_{i}$, $\overline{V}(x, \pi)$ must converge to the reference CLF $\overline{V}_r$. Therefore, if the state is in a deadlock-free region, its dynamics is determined from the reference CLF.

This desired effect can be achieved by the following QP controller for the CLF shape state $\pi \in \mathbb{R}^{\dim{\mathcal{S}^n}}$:
\begin{align}
u_\pi^\star &= 
\argmin_{(u_v,\delta_v)\in\mathbb{R}^{p+1}} \norm{u_\pi}{}^2 + p_{\pi} \delta_v^2
\label{eq:shape_controller} \\
& \textrm{if $x \in \Omega^{clf}_{i}$:} \nonumber \\
& \nabla V_{\pi}( \pi, H_{\overline{V}_i} )^\mathsf{T} u_\pi + \gamma_{\pi} V_\pi(\pi, H_{\overline{V}_i}) \le \delta_v \nonumber \\
& \textrm{otherwise:} \nonumber \\
& \nabla V_{\pi}( \pi, H_{\overline{V}_r} )^\mathsf{T} u_\pi + \gamma_{\pi} V_\pi(\pi, H_{\overline{V}_r}) \le \delta_v \nonumber
\end{align}
where $p_\pi, \gamma_\pi > 0$, set $\Omega^{clf}_i$ as defined in \eqref{eq:Si_set} and $V_{\pi}$ as in \eqref{eq:clf_shape} with reference CLFs drawn from the set $\{ H_{\overline{V}_r}, H_{\overline{V}_1}, \cdots, H_{\overline{V}_N} \}$, depending on the region of the state space where the closed-loop state is located.
With \eqref{eq:shape_controller}, the CLF shape state $\pi$ is controlled to achieve $\overline{V} \rightarrow \overline{V}_i$ when $x \in \Omega^{clf}_{i}$, and $\overline{V} \rightarrow \overline{V}_r$ otherwise.
\begin{rem}
The strategy proposed in \eqref{eq:shape_controller} controls the curvature of the CLF level sets in order to achieve CLF compatibility with respect to the $i$-th active barrier. While this method guarantees that attractive equilibrium points are avoided, the occurrence of other types of attractors such as limit cycles is not theoretically eliminated. { An in depth study of the formation of limit cycles in CBF-based safety-critical control is presented in \cite{mestres2025control}}.
\end{rem}
%
\subsection{Numerical Simulations}
\label{sec:examples}

In this section, we present numerical examples that demonstrate the viability of the proposed method.
The code repository used for producing the results of this section is publicly available at \url{https://github.com/C2SR/CompatibleCLFCBF}.

%
{\bf Simulation 1.} 
Consider again the two-dimensional LTI system $\dot{x}_k = -2 x_k + u_k$, $k=1,2$ from Example 1, whose Q-function was shown in \fref{fig:qfunction} for a given quadratic CLF $\overline{V}_r$ and CBF $h_3$.
This system satisfies \aref{assumption:clf_condition}, and therefore, no interior equilibrium points other than the origin exist.
\begin{figure}[htbp]
\centering
\includegraphics[width=1.0\columnwidth]{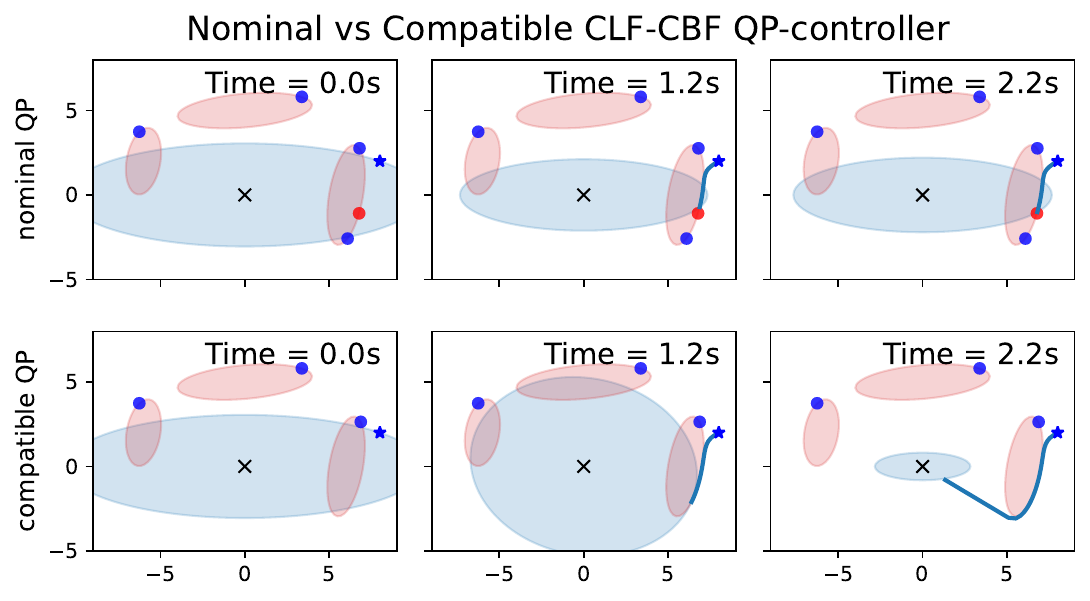}
\vspace{-6mm}
\caption{CLF-CBF controller: fixed CLF vs adaptive strategy. The first row shows the results with a fixed CLF, while the second row shows the results obtained by dynamically adapting the CLF level sets.}
\label{fig:traj_quadratic1}
\end{figure}
%
Here, we consider three quadratic barriers $h_1$, $h_2$, and $h_3$, with unsafe sets shown in \fref{fig:traj_quadratic1} as red ellipses (left, top, and right, respectively). Here, the CLF reference $\overline{V}_r$ and the CBF $h_3$ are the ones used in Example 1 to derive the Q-function from \fref{fig:qfunction}.
The first row of \fref{fig:traj_quadratic1} shows the results obtained using the nominal QP controller \eqref{eq:QP_control} with a fixed reference CLF $\overline{V}_r$ (level set is the blue ellipse) and all three CBF constraints.
As expected, the trajectories converge towards a stable equilibrium point at $\partial \mathcal{C}_3$ (red dot). Two other unstable boundary equilibria also exist at $\partial \mathcal{C}_3$ (blue dots), as seen from the corresponding Q-function in \fref{fig:qfunction}. Each of the remaining CBFs has only one unstable boundary equilibrium. Therefore, $\overline{V}_r$ is compatible with $h_1$ and $h_2$, and is not compatible with $h_3$.
{
Notice that if $\overline{V}_r$ was such that its elliptical level sets were rotated $90$ degrees, their compatibility with the barriers would change: in this case, $\overline{V}_r$ would be compatible with $h_1$ and $h_3$, but not with $h_2$.}

Three compatible CLFs are computed using the optimization \eqref{eq:compatibilization_opt}: $H_{\overline{V}_1}$, $H_{\overline{V}_2}$ and $H_{\overline{V}_3}$, each 
being the $i$-th compatible Hessian closest to the reference $H_{\overline{V}_r}$. 
Here, since $\overline{V}_r$ is already compatible with $h_1$ and $h_2$ (only unstable equilibrium points exist), $H_{\overline{V}_1} = H_{\overline{V}_r}$ and $H_{\overline{V}_2} = H_{\overline{V}_r}$, while $H_{\overline{V}_3}$ is the Hessian of a CLF $\overline{V}_3$ that is compatible with $h_3$. Its level sets are ellipses with slightly smaller eccentricity when compared to $\overline{V}_r$.

The second row of \fref{fig:traj_quadratic1} shows the results obtained using our proposed compatible QP controller \eqref{eq:QP_control} with a CLF $V$
obtained from the inverse transformation of \ref{eq:barV} of a quadratic CLF $\overline{V}(x,\pi)$, parametrized according to \eqref{eq:parameterized_hessian}. 
In this example, we use the simple class $\mathcal{K}$ function $\gamma(V) = \gamma_c V$, where $\gamma_c > 0$ is a constant. Solving \eqref{eq:barV}, the transformed CLF is $\overline{V}(x,\pi) = \frac{1}{2} V^2$, and the inverse transformation is $V = 2 \overline{V}(x,\pi)^{\frac{1}{2}}$.
The Hessian $H_{\overline{V}}(\pi)$ is controlled by our proposed strategy using \eqref{eq:shape_controller}.
From the timestamps, the level sets of $\overline{V}$ dynamically change to match those of $\overline{V}_3$ when $x(t) \in \Omega^{clf}_{3}$, inducing a bifurcation that removes the stable point (and one of the unstable points as well). Only one stable point remains at the boundary $\partial \mathcal{C}_3$.
The system trajectories converge towards the origin for all tested initial conditions, and the level sets of $\overline{V}$ 
converge back to match those of $\overline{V}_r$ after the state leaves $\Omega^{clf}_{3}$, as seen from the second row, third column of \fref{fig:traj_quadratic1}.



{\bf Simulation 2.}
Consider again the {\it underactuated} two-dimensional LTI system $\dot{x}_1 = x_2$, $\dot{x}_2 = - x_1 - x_2 + u$, $u \in \mathbb{R}$
from Example 2. For this system, the state transition matrix $A$ is Hurwitz stable and the pair $(A,B)$ is controllable.
The same CBFs $h_1, h_2$ (left and top) from Simulation 1 were used, and CLF reference $\overline{V}_r(x)$ and CBF $h_3(x)$ (right) were the ones from Example 2, previously used to obtain the Q-function from \fref{fig:qfun_underactuated}.
\begin{figure}[htbp]
\centering
\includegraphics[width=1.0\columnwidth]{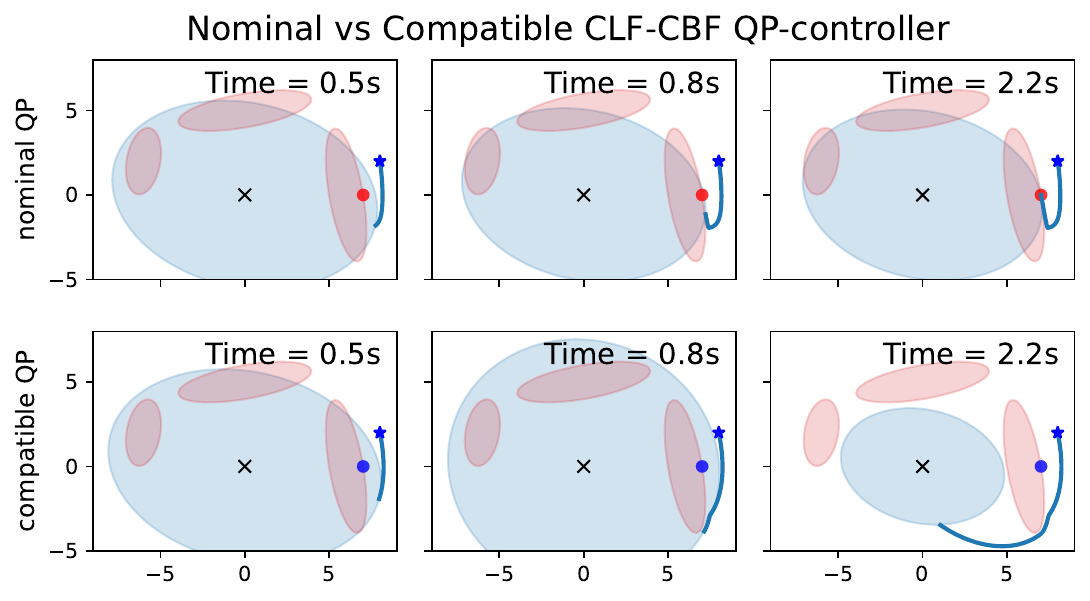}
\vspace{-6mm}
\caption{Compatible QP controller: fixed CLF vs adaptive strategy with underactuated system.}
\label{fig:traj_quadratic2}
\end{figure}

From Example 2, using the nominal QP controller \eqref{eq:QP_control} with $\overline{V}_r$ results in a stable boundary equilibrium point in $\partial \mathcal{C}_3$, as shown by the converging trajectory in the first row of \fref{fig:traj_quadratic2}.
Once again, three compatible Hessians were computed using the optimization \eqref{eq:compatibilization_opt}: $H_{\overline{V}_1}$, $H_{\overline{V}_2}$ and $H_{\overline{V}_3}$, each being the $i$-th compatible Hessian closest to the reference CLF Hessian $H_{\overline{V}_r}$, shown by the level sets shown in blue in the first row of \fref{fig:traj_quadratic2}.
Here, again $H_{\overline{V}_1} = H_{\overline{V}_2} = H_{\overline{V}_r}$ since the reference CLF 
$\overline{V}_r$ is already compatible with barriers $h_1$ and $h_2$ (in this case, no boundary equilibrium points exist at $\partial \mathcal{C}_1$ or $\partial \mathcal{C}_2$). The third computed CLF Hessian $H_{\overline{V}_3}$ is such that the CLF $\overline{V}_3 = \frac{1}{2} x^\mathsf{T} H_{\overline{V}_3} x$ is compatible with $h_3$.

In the second row of \fref{fig:traj_quadratic2}, the results for the QP controller with the adaptive strategy for the CLF level sets are shown. The Hessian $H_{\overline{V}}(\pi)$ is once again controlled by our proposed strategy using \eqref{eq:shape_controller}.
As shown in the second row of \fref{fig:traj_quadratic2}, the level sets of $\overline{V}$ dynamically change to match those of $\overline{V}_3$ when $x(t) \in \Omega^{clf}_{3}$, inducing a bifurcation that effectively transforms the previously stable equilibrium point into an unstable one in $\partial \mathcal{C}_3$.
Therefore, instead of converging towards the boundary $\partial \mathcal{C}_3$, the trajectory circulates the third obstacle and converges to the origin.

\section{Conclusion}


In this work, we have fully characterized the conditions for existence of undesirable equilibrium points arising in the CLF-CBF QP framework and 
their stability properties, considering nonlinear, control-affine systems and multiple CBFs.
In particular, we have shown that the conditions for existence and instability of boundary equilibria depend on \eqref{eq:fi} and its derivatives \eqref{eq:Jfi}. 
%
We have demonstrated that boundary equilibrium points cannot be fully removed in general, and through the concept of CLF compatibility, we propose the possibility of choosing the CLF in such a way that all {\it stable} boundary equilibrium points are removed, for certain classes of systems.
%
%
For driftless full-rank and LTI systems, the formation and stability of boundary equilibrium points with quadratic CLF-CBF pairs can be analyzed using the Q-function \eqref{eq:qfunction}, and in particular the theory of matrix polynomials, as described in \sref{sec:clf_compatibility}.
Additionally, we have proposed an algorithm for computing a compatible quadratic CLF with respect to a quadratic CBF, and a control strategy to modify the CLF in \eqref{eq:QP_control}, aiming to remove all stable equilibrium points from the closed-loop dynamics.
{
Future related research aims towards extending the Q-function theory for CLF compatibility for general classes of systems and CLF-CBF pairs.}

\bibliographystyle{plain}
\bibliography{references}

\end{document}